\documentclass[a4paper,12pt]{article}
\usepackage[margin=1in]{geometry}
\usepackage{titling}
\usepackage{titlesec}
\titlespacing*{\section}{0pt}{10pt}{5pt}
\titlespacing*{\subsection}{0pt}{0pt}{15pt}

\titleformat{\section}
{\normalfont\normalsize\bfseries}{\thesection}{1em}{}

\titleformat{\subsection}[runin]
{\normalfont\normalsize\bfseries\itshape}{\thesubsection}{1em}{}

\linespread{1.3}

\usepackage{parskip}
\setlength{\parindent}{15pt}
\usepackage{indentfirst}

\usepackage{chngcntr}
\usepackage{apptools}
\AtAppendix{\counterwithin{theorem}{section}}

\usepackage[nokeyprefix]{refstyle}
\usepackage{varioref}
\usepackage{xr-hyper}

\usepackage[noabbrev,capitalize]{cleveref}
\crefformat{section}{\S#2#1#3}
\crefformat{subsection}{\S#2#1#3}
\crefformat{subsubsection}{\S#2#1#3}

\usepackage{bm}
\newcommand{\mb}{\bm}
\newcommand{\counteralloc}{P_{\alpha, L}}
\usepackage{amsmath}
\newcommand{\numerator}{P_{\alpha, L}(\mb A_{i, -j} | A_{ij} = a, \mb L_i)}
\newcommand{\denominator}{f_{\mb A| \mb L, i}(\mb A_i | \mb L_i)}
\newcommand{\vect}[1]{\boldsymbol{\mb{#1}}}
\newcommand{\denominatorg}{f_{\mb A| \mb L, i}(\mb A_i | \mb L_i;\vect\gamma)}
\newcommand\numberthis{\addtocounter{equation}{1}\tag{\theequation}}
\newcommand{\numeratorF}{P_{F_\alpha, L}(\mb A_{i, -j} | A_{ij} = a, \mb L_i)}
\usepackage{graphicx}
\usepackage{float}
\usepackage{url}
\usepackage{color}


\title{\vspace{-1cm} \Large \bf  Causal inference for interfering units with cluster and population level treatment allocation programs \vspace{20pt}}
\author{Georgia Papadogeorgou$^*$, Fabrizia Mealli$^\dagger$, Corwin M. Zigler$^*$ \vspace{-3.5cm}}
\date{}

\allowdisplaybreaks

\usepackage{natbib}
\usepackage{amsthm}
\newtheoremstyle{style}
{5pt} 
{5pt} 
{\itshape} 
{} 
{\bfseries} 
{.} 
{.5em} 
{} 

\theoremstyle{style}
\newtheorem{assumption}{Assumption}
\newtheorem{theorem}{Theorem}
\newtheorem{proposition}{Proposition}
\newtheorem{lemma}{Lemma}

\newtheorem{example}{Example}
\numberwithin{example}{section}
\numberwithin{equation}{section}

\begin{document}

\begin{titlingpage}
\maketitle
\vspace{-40pt}
\begin{center}
	\small{
	$^*$ Department of Biostatistics, Harvard T.H. Chan School of Public Health, Boston MA, USA. \\
	$^{\dagger}$ Department of Statistics, Informatics, Applications, University of Florence, Florence, Italy.
	}
\end{center}
\vspace{20pt}

\begin{abstract}
Interference arises when an individual's potential outcome depends on the individual treatment level, but also on the treatment level of others. A common assumption in the causal inference literature in the presence of interference is partial interference, implying that the population can be partitioned in clusters of individuals whose potential outcomes only depend on the treatment of units within the same cluster. Previous literature has defined average potential outcomes under counterfactual scenarios where treatments are randomly allocated to units within a cluster.
However, within clusters there may be units that are more or less likely to receive treatment based on covariates or neighbors' treatment. We define new estimands that describe average potential outcomes for realistic counterfactual treatment allocation programs, extending existing estimands to take into consideration the units' covariates and dependence between units' treatment assignment.  We further propose entirely new estimands for population-level interventions over the collection of clusters, which correspond in the motivating setting to regulations at the federal (vs. cluster or regional) level.  We discuss these estimands, propose unbiased estimators and derive asymptotic results as the number of clusters grows.
Finally, we estimate effects in a comparative effectiveness study of power plant emission reduction technologies on ambient ozone pollution.
\end{abstract}

\noindent
\textit{Keywords:
Air pollution; Interference; Inverse probability weighting; Policy evaluation.}
\end{titlingpage}

\section{Introduction}
Most causal inference literature assumes that a unit's potential outcome depends solely on its treatment, and does not depend on the treatments of other units in the population. However, this assumption is often not reasonable. Perhaps the most classical example arises in vaccination studies \citep{Ali2005,Hudgens2008} where a unit's disease status depends on their own vaccination status but also on the vaccination status of others in their social network.  The presence of interference can lead to misleading results for familiar causal estimands \citep{Sobel2006}, or estimands that lack clear causal interpretation \citep{Tchetgen2012}, but can also introduce new estimands of intrinsic scientific interest.  

\cite{Sobel2006} defined estimands for interference when the population can be partitioned into clusters for which a unit's potential outcomes depend only on the treatment of units within the same cluster. Such assumption is called \textit{partial interference}, and the interference clusters are also called interference groups. \cite{Hudgens2008} formalized causal inference in the presence of interference in the context of two-stage randomization designs, which was extended to observational studies by \cite{Tchetgen2012}. 

In order to continue development in the context of observational studies, we highlight a key distinction that arises when formulating average potential outcomes in the presence of interference, which generally requires consideration of vectors of treatment assignments.  We use the term \textit{treatment allocation strategy} to refer to a process giving rise to either observed or hypothesized vectors of treatment assignments. The {\it observed treatment allocation strategy} refers to that which gives rise to observed treatments. The {\it counterfactual treatment allocation strategy} refers to how treatments may have been assigned in some hypothesized counterfactual world for which causal contrasts can be considered. This distinction between observed and counterfactual treatment allocation programs helps illuminate that existing causal estimands, such as those in \cite{Tchetgen2012}, are limited to counterfactual treatment allocation programs that remain agnostic with regard to covariate information (as would be the case in a two-stage randomized study). These estimands ignore the possible role of unit-level covariates that relate to treatment adoption, implicitly assuming an intervention manipulating \textit{each individual unit's treatment propensity}.  Consequently, these estimands pertain to counterfactual worlds where, for example, treatments are allocated to units according to a Bernoulli distribution with equal probability for each unit within a cluster.

In many settings, however, treatment allocations corresponding to unit-level manipulation are difficult to conceive.  For example, policy interventions may be designed to increase the regional prevalence of a treatment without direct control over the individual treatment propensity.  In such settings, individual treatment adoption might generally depend on unit-level covariates or the treatment status of neighboring units.  To address such settings, we develop new causal estimands anchored to counterfactual treatment allocations that correspond to realistic regional interventions conceived at the cluster level, where a particular allocation strategy dictates the cluster-average propensity of receiving treatment \textit{without} directly specifying individual-level treatment propensities. 
Specifically, under the assumption of partial interference, we introduce estimands for counterfactual treatment allocation programs which do not assume unit-level manipulation of treatment propensities, but allow for: 1) correlation of treatment assignment within a cluster; and 2) unit-level propensities of treatment that depend on individual and group level covariates.
Note that, in focusing on new estimands for covariate-dependent counterfactual treatment allocations programs, our work has commonalities with independent work in \cite{Barkley2017a}.

In addition to extending existing estimands to accommodate realistic regional treatment allocations, a key contribution of this work provides entirely novel estimands motivated by the desire to investigate interventions which take place at the population (vs. regional or cluster) level. 
These estimands and can be particularly useful for evaluating policies that are not designed to manipulate individual or cluster-average treatment propensity, but rather change the distribution of cluster-average propensities of receiving treatment by, for example, providing a population-wide incentive to adopt treatment.  These estimands are for counterfactual treatment allocation strategies defined {\it at the population level} to shift the distribution of the cluster-average propensity of receiving the treatment, without specifying the average treatment propensity of any specific cluster.

Definition of the new causal estimands described above is accompanied here by new estimators and derivation of corresponding asymptotic properties as the number of clusters grows.  Related work can be found in \cite{Ferracci2012}.  Other relevant work includes \cite{Liu2014} where asymptotic results are derived for growing number of clusters or number of individuals within clusters, \cite{Perez2015} where large sample variance estimators for the estimator of \cite{Tchetgen2012} are derived, and \cite{Liu2016}, where estimands and estimators are extended to the case of a network where partial interference does not hold, but asymptotic results are derived under the assumption of partial interference.

The motivating context for this work is the evaluation of interventions to limit harmful pollution from power plants that are geographically clustered.  The movement of air pollution through space leads to interference: intervening on one power plant can affect the air pollution surrounding nearby power plants.  Existing estimands such as those in \cite{Tchetgen2012} represent quantities for counterfactual treatment allocations in two steps where 1) a constant treatment probability governs the proportion of power plants that would be ``treated'' within a cluster, and 2) based on that probability, power plants within the cluster are randomly and independently assigned the treatment.  However, this structure does not cohere to that of air pollution regulations, where, in reality, the adoption of treatments at power plants is not directly mandated and is heavily influenced by power-plant characteristics (e.g., the size or operating capacity of the plant).  Instead, regulatory programs often work by incentivizing regions of power plants to adopt certain technologies (e.g., by changing the penalties for over-emission), but which power plants actually adopt them is highly dependent on covariates and may be spatially correlated. 
Additionally, estimands at the population level could refer to counterfactual situations where some higher level of government (e.g., federal) issues incentives for power plants to install the technologies, but cannot mandate installation, and different regions can comply to different degrees. 
Thus, new estimands for counterfactual treatment allocations where individual-level treatment adoption depends on covariates for population-level incentives
cohere more closely to the realities of air pollution regulations. 
The new estimators are deployed here to an analysis of U.S. power plants investigating the comparative effectiveness of Selective Catalytic or non-Catalyitic Reduction systems (relative to other strategies) for reducing ambient ozone pollution. A preliminary investigation of these same data in \cite{Papadogeorgou2018} ignored interference and indicated that these systems causally reduced NO$_x$ emissions (an important precursor to ozone pollution) but did not lead to a reduction in ambient ozone.  The analysis here to address the possibility of interference produces meaningfully different results that are more consistent with the literature relating NO$_x$ emissions to ambient ozone pollution. Note that, despite the focus on air pollution interventions, similar considerations could be construed in more classical interference settings such as vaccine studies, where certain types of community members might be more likely to receive the vaccine and vaccine programs may be designed to increase vaccine coverage at the community, or national level.

In \cref{sec:notation_estimand} we introduce the notation and the new estimands for the cluster-level intervention, followed by the definition of estimands for the population-level intervention in \cref{sec:est_aver_over_alpha}. Estimators are presented in \cref{sec:estimator}, along with unbiasedness, consistency and asymptotic normality results for an increasing number of clusters.
The rest of the paper presents some simulations in \cref{sec:simulations}, our data application in \cref{sec:data_application} and concludes with some discussion on the limitations and future directions of this paper in \cref{sec:discussion}.


\section{Estimands under partial interference}
\label{sec:notation_estimand}

We adopt the notation used in \cite{Tchetgen2012}. Let $N$ be the number of clusters, and $n_i$ the number of units in cluster $i$, $i \in \{1, 2, \dots, N\}$. Furthermore, denote $\mb A_i = (A_{i1}, A_{i2}, \dots, A_{in_i}) \in \mathcal{A}(n_i)$ to be the cluster treatment vector, and $\mb A_{i, -j} = (A_{i1}, A_{i2}, \dots, A_{ij-1}, A_{ij+1}, \dots, A_{in_i}) \in \mathcal{A}(n_i - 1)$ to be the treatment of all units in cluster $i$ apart from unit $j$, where $\mathcal{A}(n) = \{0, 1\}^{n}$.
Furthermore, let $L_{ij}$ be a vector of individual and cluster-level covariates, and $\mb L_i = (L_{i1}, L_{i2}, \dots, L_{in_i})$ be the collection of covariates of all units within a cluster.

Under the assumption of partial interference, the potential outcome of unit $j$ in cluster $i$ may depend on the treatment of units in cluster $i$, but not on the treatment of units in different clusters. For every $i$ we postulate the existence of group $i$'s potential outcomes $\mb{Y}_i(\cdot) = \{\mb{Y}_i(\mb a_i), \mb a_i \in \mathcal{A}(n_i) \}$, where $\mb{Y}_i(\mb a_i) = (Y_{i1}(\mb a_i), Y_{i2}(\mb a_i), \dots, Y_{in_i}(\mb a_i))$.


\subsection{Average potential outcome}

Under the assumption of partial interference, we define the individual average potential outcome for a counterfactual treatment allocation strategy with two features: 1) treatment assignment for units within a cluster is unlikely to be independent, and 2) individual covariates can be predictive of a unit's treatment probability. Let $\counteralloc$ represent the (arbitrarily specified) counterfactual treatment allocation program, specified intentionally to depend on covariates and/or allow correlated assignments within clusters. $\counteralloc$ is governed by parameters $\alpha$, which represent features of the counterfactual treatment allocation program of interest. For the purpose of this paper, we consider $\alpha$ to represent the cluster-average propensity of treatment, but alternatives are briefly discussed in \cref{sec:discussion}.

The individual average potential outcome is defined as:
\begin{align*}
\overline{Y}_{ij}^L(a ; \alpha)
= \sum_{\mb s \in \mathcal{A}(n_i - 1)}
Y_{ij}&(A_{ij} = a, \mb A_{i,-j} = \mb s)
\counteralloc(\mb A_{i,-j} = \mb s | A_{ij} = a, \mb L_i),
\numberthis
\label{eq:ind_aver_po}
\end{align*}
and represents the expected outcome for unit $j$ in cluster $i$ in the counterfactual world where treatment is assigned with respect to $\counteralloc$, but the treatment of unit $j$ is fixed to $a$. This estimand is well-defined for any fixed choice of $\counteralloc$. Based on the individual average potential outcome, group and population average potential outcomes are defined as
\begin{equation}
\overline{Y}_i^L(a; \alpha) =
\frac 1{n_i} \sum_{j = 1}^{n_i} \overline{Y}_{ij}^L(a; \alpha),
\label{eq:group_aver_po}
\end{equation}
and
\begin{equation}
\overline{Y}^L(a; \alpha) = \frac 1{N} \sum_{i = 1}^N \overline{Y}_i^L(a; \alpha) \label{eq:pop_aver_po}
\end{equation}
accordingly.

\subsection{The counterfactual treatment allocation in existing literature}

As mentioned previously, $\counteralloc$ can be arbitrarily chosen and represents the process with which treatment is assigned in the counterfactual world, driving the interpretation of all estimands. The above development has left unspecified the term $\counteralloc$ in (\ref{eq:ind_aver_po}) providing relative weights to different cluster treatment vectors in the individual average potential outcomes. The estimands in
\cite{Tchetgen2012} and \cite{Perez2015} correspond to counterfactual treatment strategies
\(\displaystyle
\counteralloc(\mb a_i | \mb L_i) = 
\prod_{j = 1}^{n_i}\alpha^{a_{ij}} (1 - \alpha)^{1 - a_{ij}}\), giving equal probability to all cluster-treatment vectors with the same number of treated units, irrespective of which those units are.
For this choice of $\counteralloc$ the estimands represent quantities in counterfactual worlds where individual treatment probability can be manipulated and units are assigned to treatment independently and with equal probability $\alpha$.

\subsection{Realistic counterfactual treatment allocation program}
\label{sec:choose_numerator}

However, in some situations, counterfactual treatment allocations can only be realistically conceived if allowed to depend on covariates or if they incorporate correlation between treatment of units in the same cluster.
In the study of power plant interventions on ambient air quality, the decision of whether to ``treat'' a power plant is at the discretion of the power company and heavily influenced by power plant covariates. Therefore, a hypothesized counterfactual treatment allocation is realistic only when such covariates are incorporated.

As an example, consider the power-plant level covariate `heat input', a proxy for the size of the power plant, and let $L_{ij}$ be the heat input of power plant $j$ in cluster $i$. Then, one specification of a counterfactual treatment allocation strategy that would acknowledge that different-sized power plants are more or less likely to adopt treatment is:
\begin{equation}
\mathrm{logit} \counteralloc(A_{ij} = 1 | L_{ij}) = \xi_i^\alpha + \delta_L L_{ij},
\label{eq:P_alpha_L}
\end{equation}
for some \textit{fixed}, pre-specified value of $\delta_L$, and $\xi_i^\alpha$ such that 
$$
\frac1{n_i} \sum_{j = 1}^{n_i} \mathrm{expit} \left( \xi_i^{\alpha} + \delta_L L_{ij} \right) = \alpha.
$$
The value $\delta_L$ here could be specified according to knowledge of how the size of the power plant is expected to impact the propensity to adopt treatment.

Based on (\ref{eq:P_alpha_L}), the probability of the cluster treatment vector under the counterfactual treatment allocation $\counteralloc(\mb A_i = \mb a_i | \mb L_i)$ could be fully specified  by hypothesizing that the $A_{ij}$'s are conditionally independent given $\mb L_i$, and $A_{ij}$ is conditionally independent of $\mb L_{i, -j}$ given $L_{ij}$. Then,
$$
\counteralloc(\mb A_i = \mb a_i | \mb L_i) =
\prod_{j = 1}^{n_i} \counteralloc(A_{ij} = a_{ij} | \mb L_i) =
\prod_{j = 1}^{n_i} \counteralloc(A_{ij} = a_{ij} | L_{ij})
$$
which, in turn, specifies $\counteralloc(\mb A_{i,-j} = \mb s | A_{ij} = a, \mb L_i)$ for all $\mb s \in \mathcal{A}(n_i - 1)$ giving relative weights in the specification of the individual average potential outcome (\ref{eq:ind_aver_po}).
Based on this specification of $\counteralloc$, the estimands of interest correspond to quantities in a hypothesized world where treatment is assigned independently across units with treatment propensity that depends on $L_{ij}$, but is on average equal to $\alpha$.

Alternatively, a counterfactual treatment allocation strategy can also be defined to incorporate dependence of treatments in the same cluster. For example, consider
$$
\mathrm{logit} \counteralloc (A_{ij} = 1 | L_{ij}, \theta_{ij}) =
\xi_i^\alpha + \delta_L L_{ij} + \theta_{ij},
$$
where $\theta_{ij}$ is a mean 0 spatial random effect with fixed correlation matrix decaying with distance. This choice of $\counteralloc$ corresponds to a counterfactual treatment allocation program that depends on covariates and incorporates dependent treatment assignment of units within a cluster. A data-driven way to choose $\counteralloc$ is presented in \cref{sec:data_application}.

\subsection{Direct and indirect effects}
\label{subsec:direct_indirect}

Different contrasts of average potential outcomes can be considered to characterize how treatment affects the outcome of interest. For counterfactual allocation strategy $\counteralloc$, direct effects represent contrasts in average potential outcomes when only the individual treatment changes. On the other hand, indirect effects contrast average potential outcomes for a fixed level of individual treatment, but different specification of the parameter $\alpha$ governing the counterfactual allocation program. For that reason, indirect effects represent expected changes in potential outcomes for changes only in the ``treatment of neighbors'', and they can be thought of as a measure of interference. Indirect effects are also known in the literature as spillover effects.

Based on the individual, group and population average potential outcomes, one can define the individual, group and population direct effects as
\begin{align*}
DE_{ij}^L(\alpha) = &\overline{Y}^L_{ij}(1;\alpha) - \overline{Y}^L_{ij}(0;\alpha), \\
DE_i^L(\alpha) = & \overline{Y}_i^L(1,\alpha) - \overline{Y}_i^L(0;\alpha)
= \frac{1}{n_i} \sum_{j = 1}^{n_i} DE_{ij}^L(\alpha) \\
DE^L(\alpha) = & \overline{Y}^L(1,\alpha) - \overline{Y}^L(0;\alpha)
= \frac{1}{N} \sum_{i = 1}^N DE_i^L(\alpha)
\end{align*}
accordingly. Similarly, the individual indirect effect is defined as
$$ IE_{ij}^L(\alpha_1, \alpha_2) = \overline Y_{ij}^L(0,\alpha_2) - \overline Y_{ij}^L(0, \alpha_1),$$
based on which group and population indirect effects can be defined. Indirect effects could be alternatively defined for individual treatment assignment $a = 1$, but here our focus is on the effect of neighbors' treatment in the areas surrounding untreated power plants.
Contrasts other than the difference can also be considered. Based on these estimands, total effects can be defined as the sum of direct and indirect effects \citep{Hudgens2008}, while similar development can lead to the definition of overall effects.


\section{Population-Level counterfactual distribution of cluster-average treatment propensity}
\label{sec:est_aver_over_alpha}

In \cref{sec:notation_estimand} we defined the individual average potential outcome for unit $j$ in cluster $i$ (and other estimands based on it) when the cluster-average propensity of treatment $\alpha$ is fixed to a counterfactual value. Those estimands correspond to quantities of interest in counterfactual worlds were one intervenes at the level of the cluster, but units within the cluster are still allowed to choose their own treatment. In this section, new individual average potential outcomes are defined, when the unit's treatment is set to $a$, but the cluster average propensity of treatment is not fixed to a specific value $\alpha$ but arises from a hypothesized distribution.

These estimands play an important role for policy interventions that occur at a high (vs. local) administrative level. For example, consider an observed distribution of cluster-average treatment propensity $\widehat{F}_\alpha$, and an intervention that takes place over all clusters incentivizing the increase of cluster treatment coverage. This intervention does not enforce a specific average propensity of treatment for each cluster separately, but leads to an overall shift in the distribution of cluster average propensity of treatment.

Let $F_\alpha(\cdot)$ denote the observed or a hypothesized distribution of cluster-average propensity of treatment. Then, define the $F_\alpha$-individual average potential outcome as
\begin{align*}
\overline{Y}_{ij}^L(a; F_\alpha) = &
\int \overline{Y}_{ij}^L(a ; \alpha) \ \mathrm{d}F_\alpha(\alpha)
\numberthis
\label{eq:ind_aver_po_over_alpha} \\
= & \sum_{\mb s \in \mathcal{A}(n_i - 1)}
Y_{ij}(A_{ij} = a, \mb A_{i,-j} = \mb s)
\int \counteralloc(\mb A_{i,-j} = \mb s | A_{ij} = a, \mb L_i) \ \mathrm{d}F_\alpha(\alpha).
\end{align*}
Thus, $\overline{Y}_{ij}^L(a; F_\alpha)$ describes the average potential outcome of unit $j$ in cluster $i$, for cluster average probability of treatment arising from $F_\alpha$.
Consequently, the $F_\alpha$-group and population average potential outcomes are defined as
\begin{align*}
\overline{Y}_i^L(a; F_\alpha) = & \frac 1{n_i}
\sum_{j = 1}^{n_i} \overline Y_{ij}^L(a; F_\alpha) \\
\overline{Y}^L(a; F_\alpha)
= & \frac1N \sum_{i = 1}^{N} \overline Y_i^L(a; F_\alpha)
\numberthis
\label{eq:Falpha_estimand}
\end{align*}
accordingly.
Although the above estimands are well-defined for a distribution $F_\alpha$ different than the observed one, $F_\alpha$ needs to have overlapping support with the empirical distribution $\widehat{F}_\alpha$ in order to reliably estimate such quantities.

Even though direct effect estimands based on the $F_\alpha$-population average potential outcome can easily be defined as $DE(F_\alpha) = \overline{Y}^L(1;F_{\alpha}) - \overline{Y}^L(0;F_\alpha)$, the contrast of $F_\alpha$-population average potential outcomes is more interesting for the indirect effect. For two hypothesized distributions of cluster-average propensity of treatment $F_\alpha^1, F_\alpha^2$, define
\begin{equation}
IE\left(F_\alpha^1, F_\alpha^2\right) = \overline{Y}^L\left(0;F_\alpha^2\right) -
\overline{Y}^L\left(0;F_{\alpha}^1\right).
\label{eq:Falpha_indirect}
\end{equation}
Then, $IE\left(F_\alpha^1, F_\alpha^2\right)$ represents the expected outcome change for control units when the distribution of cluster-average propensity of treatment changes from $F_\alpha^1$ to $F_\alpha^2$.

\section{Estimating the population average potential outcome}
\label{sec:estimator}

For a fixed choice of $\counteralloc$, we provide estimators of the population average potential outcome in (\ref{eq:pop_aver_po}), unbiasedness and consistency results, and derive the estimator's asymptotic distribution when the number of clusters increases to infinity, for a known or correctly specified parametric cluster-propensity score model (defined below). Based on these, estimators and asymptotic distributions for the superpopulation counterparts of the estimands in \cref{subsec:direct_indirect} can be acquired as demonstrated in Example \ref{app_ex:delta_method} of the supplementary materials. Proofs are in \cref{supp_sec:proofs}. Based on similar arguments, we acquire asymptotic results for the population level estimands in (\ref{eq:Falpha_estimand}) and (\ref{eq:Falpha_indirect}).

We start by making the sample cluster-level \textit{positivity}, and \textit{ignorability} assumptions:

\begin{assumption}
	Positivity.
	For $i \in \{1, 2, \dots, N\}$, the probability of observing cluster treatment vector $\mb a_i$ given cluster covariates $\mb L_i$ is denoted by $f_{\mb A| \mb L, i}(\mb A_i = \mb a_i | \mb L_i)$ and is positive for all $\mb a_i \in \mathcal{A}(n_i)$. $f_{\mb A| \mb L, i}$ is the cluster-propensity score.
	\label{ass:group_positivity}
\end{assumption}
%
\begin{assumption}
	Ignorabililty.
	For $i \in \{1, 2, \dots, N\}$, the observed cluster treatment $\mb A_i$ is conditionally independent of the set of cluster potential outcomes $\mb{Y}_i(\cdot)$ given the covariates $\mb L_i$, denoted as $\mb A_i \amalg \mb{Y}_i(\cdot) | \mb L_i$.
	\label{ass:group_ignorability}
\end{assumption}

\subsection{Estimators of the group and population average potential outcome}

\label{sec:estimator_form}
Let
\begin{equation}
	\widehat{Y}_i^L(a ; \alpha) = \frac 1{n_i} \sum_{j = 1}^{n_i}
	\frac{\numerator}{\denominator} I(A_{ij} = a) Y_{ij}
    \label{eq:est_group_average_cond}
\end{equation}
and
\begin{equation}
    \widehat{Y}^L(a; \alpha) = \frac1N \sum_{i = 1}^N
    \widehat{Y}_i^L(a; \alpha)
    \label{eq:est_pop_aver_po}
\end{equation}
where
$\denominator$ is the cluster-level propensity score for the observed treatment, and $\numerator$ is the probability of the observed treatment on units other than $j$ given $A_{ij} = a$, under the specified counterfactual treatment allocation program.

Assuming that the group level propensity score $f_{\mb A|\mb L, i}(\cdot | \mb L_i)$ is known and Assumptions \ref{ass:group_positivity} and \ref{ass:group_ignorability} hold, then $\widehat{Y}^L_i(a ; \alpha)$, $\widehat{Y}^L(a;\alpha)$ are unbiased for $\overline{Y}_i^L(a, \alpha), \overline{Y}^L(a, \alpha)$ accordingly, as defined in (\ref{eq:group_aver_po}), (\ref{eq:pop_aver_po}). Unbiasedness is derived for a fixed set of clusters with respect to the distribution of the observed treatment assignment.

The population average potential outcome (\ref{eq:pop_aver_po}) is defined as the average of the group average potential outcomes. Alternative definitions could weigh each cluster by cluster sample size (which is what the population average potential outcome of \cite{Liu2016} simplifies to under the assumption of partial interference). In \cref{supp_sec:population_estimand}, we discuss this distinction and provide an argument why an equal-weight estimand and the corresponding estimator (\ref{eq:est_pop_aver_po}) is preferable.

\subsection{Asymptotic results for $\widehat{Y}^L(a;\alpha)$ for known propensity score}

We derive the asymptotic properties of the estimator in (\ref{eq:est_pop_aver_po}) for an increasing number of clusters $N$, denoted by $\widehat{Y}^L_N(a; \alpha)$. 
Let $\widehat Y^L_N(\alpha) = \Big(\widehat{Y}^L_N(0; \alpha),\ \widehat{Y}^L_N(1; \alpha)\Big)^T$.

Assume that the $N$ clusters are a sample of an infinite superpopulation of clusters from which they are sampled randomly. Therefore
$(\mb{Y}_i(\cdot), \mb A_i, \mb L_i)$
are now independent and identically distributed random vectors, whose distribution is denoted as $F_0$. (For notational simplicity, $n_i$ is included in $\mb L_i$.)
Assuming a superpopulation of clusters, the estimands of interest no longer pertain to the sample, but must represent quantities in the population of clusters from which the sample arose. The super-population counterpart of the population average potential outcome defined in (\ref{eq:pop_aver_po}) is
\( \displaystyle
\mu_0(a, \alpha) = E_{F_0}\Big[\overline{Y}_i^L(a; \alpha)\Big],
\)
where $\overline Y_i^L(a;\alpha)$ is defined as in (\ref{eq:group_aver_po}). Super-population direct and indirect effects correspond to contrasts in $\mu_0(a, \alpha)$.

Similarly, the sample positivity and ignorability assumptions are translated to their super-population counterparts.
\begin{assumption} Super-population positivity.
There exists $\rho > 0$ such that $\denominator > \rho$ with probability 1.
\label{ass:super_positivity}
\end{assumption}
\begin{assumption}
\textit{Super-population ignorability.}
For $F_0$, $\mb A_i \amalg \mb{Y}_i(\cdot) | \mb L_i$. 
\label{ass:super_ignorability}
\end{assumption}

\begin{theorem}
Let $\boldsymbol \mu_0(\alpha) = (\mu_0(0, \alpha), \mu_0(1, \alpha))^T$.
Under Assumptions \ref{ass:super_positivity}, \ref{ass:super_ignorability}, for known propensity score, and bounded outcome (there exists $M > 0: |Y_{ij}| < M $ with probability 1),
$\widehat{Y}_N^L(\alpha)$ is consistent for $\boldsymbol \mu_0(\alpha)$ and asymptotically normal with limiting distribution
\( \displaystyle
\sqrt{N} \left(\widehat Y^L_N(\alpha) - \boldsymbol\mu_0(\alpha) \right) \overset{d}{\rightarrow} N(0, V(\boldsymbol\mu_0(\alpha))), 
\)
where
\begin{align*}
	& V(\boldsymbol\mu_0(\alpha)) = E_{F_0} \left[
    \psi(\mb y_i, \mb l_i, \mb a_i;\boldsymbol\mu_0(\alpha))
    \psi(\mb y_i, \mb l_i, \mb a_i;\boldsymbol\mu_0(\alpha))^T \right],\\
    & \psi(\mb y_i, \mb l_i, \mb a_i;\boldsymbol\mu_0(\alpha)) = 
    \left(
    \psi_{0, \alpha}\left(\mb y_i, \mb l_i, \mb a_i;\mu_0(0, \alpha)\right), \ 
    \psi_{1, \alpha}\left(\mb y_i, \mb l_i, \mb a_i;\mu_0(1, \alpha)\right) \right)^T \\
    & \psi_{a, \alpha}(\mb y_i, \mb l_i, \mb a_i;\mu_0(a, \alpha)) = 
    \frac1{n_i} \sum_{j = 1}^{n_i} \frac{\numerator}{\denominator}
    I(A_{ij} = a)Y_{ij} - \mu_0(a, \alpha).
    \end{align*}
    \label{theorem:asymptotic_normality}
\end{theorem}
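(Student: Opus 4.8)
The plan is to exploit the fact that, with the propensity score \emph{known}, no parameters are estimated, so the estimator is literally a sample average of i.i.d.\ cluster-level quantities; consistency and asymptotic normality then reduce to the law of large numbers and the multivariate central limit theorem. Write $\mb Z_i(\alpha) = \big(\widehat Y_i^L(0;\alpha),\ \widehat Y_i^L(1;\alpha)\big)^T$. Each coordinate $\widehat Y_i^L(a;\alpha)$ is a fixed measurable function of the cluster data $(\mb Y_i(\cdot), \mb A_i, \mb L_i)$ and of the constant $\alpha$: the cluster-specific intercept $\xi_i^\alpha$ is determined by $\mb L_i$ and $\alpha$ through its defining constraint, so $\numerator$ depends only on $\mb L_i$ and $\alpha$. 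Hence the $\mb Z_i(\alpha)$ are i.i.d.\ under $F_0$ and $\widehat Y^L_N(\alpha) = N^{-1}\sum_{i=1}^N \mb Z_i(\alpha)$.

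First I would verify the moment conditions needed for the limit theorems. Under \cref{ass:super_positivity} the importance weight satisfies $\numerator/\denominator < 1/\rho$, since the numerator is a probability (hence at most $1$) and the denominator exceeds $\rho$; together with the bounded-outcome hypothesis $|Y_{ij}| < M$, this yields the uniform bound $|\widehat Y_i^L(a;\alpha)| \le M/\rho$. Thus each $\mb Z_i(\alpha)$ is bounded and in particular has finite first and second moments, so both the strong law of large numbers and the multivariate central limit theorem apply with no further integrability checks.

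Next I would identify the mean $E_{F_0}[\mb Z_i(\alpha)]$ with $\boldsymbol\mu_0(\alpha)$. Conditioning on $(\mb Y_i(\cdot), \mb L_i)$ and invoking \cref{ass:super_ignorability} to replace the conditional law of $\mb A_i$ by the propensity score $\denominator$, the denominator cancels against the weight and the inner sum over treatment vectors collapses to $\overline Y_{ij}^L(a;\alpha)$; averaging over $j$ gives $E\big[\widehat Y_i^L(a;\alpha)\mid \mb Y_i(\cdot),\mb L_i\big] = \overline Y_i^L(a;\alpha)$. This is the same manipulation underlying the sample-level unbiasedness already established, and taking the outer expectation over $F_0$ yields $E_{F_0}[\widehat Y_i^L(a;\alpha)] = \mu_0(a,\alpha)$, i.e.\ $E_{F_0}[\mb Z_i(\alpha)] = \boldsymbol\mu_0(\alpha)$. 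Consistency is then immediate from the law of large numbers, and the multivariate central limit theorem gives $\sqrt N\big(\widehat Y^L_N(\alpha) - \boldsymbol\mu_0(\alpha)\big) \overset{d}{\to} N\big(0, V\big)$ with $V = \mathrm{Var}_{F_0}\big(\mb Z_i(\alpha)\big)$.

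Finally I would match $V$ to the stated formula by noting that $\psi_{a,\alpha}(\mb y_i, \mb l_i, \mb a_i; \mu_0(a,\alpha))$ is exactly the centered coordinate $\widehat Y_i^L(a;\alpha) - \mu_0(a,\alpha)$, so $\psi(\mb y_i, \mb l_i, \mb a_i; \boldsymbol\mu_0(\alpha)) = \mb Z_i(\alpha) - \boldsymbol\mu_0(\alpha)$ has mean zero and $V = E_{F_0}[\psi\psi^T]$, precisely as claimed. The argument is essentially routine because the propensity score is known; the only step requiring care is the unbiasedness computation, where ignorability and positivity must be used correctly so that the importance weight integrates the potential outcomes against the counterfactual allocation $\counteralloc$ rather than the observed assignment mechanism. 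The more delicate asymptotics---allowing for an estimated propensity score, where $\widehat Y^L_N(\alpha)$ is no longer a simple i.i.d.\ average and an M-estimation/stacked-estimating-equation argument with $\denominatorg$ is needed---are deferred to the subsequent development.
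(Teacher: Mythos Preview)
Your proof is correct and more elementary than the paper's. The paper casts the problem in an M-estimation framework: it introduces the estimating equations $\Psi_N(\vect\mu)=\sum_i\psi(\mb O_i;\vect\mu)=0$, establishes consistency via monotonicity of $\psi_{a,\alpha}$ in $\mu$ together with a Serfling-type lemma, and then verifies the ``classical'' conditions of Theorem~5.41 in \cite{Vaart1998} (twice continuous differentiability, $E\|\psi\|_2^2<\infty$, nonsingular $E[\overset{\cdot}{\psi}]=-I_2$, dominated second derivatives) to obtain the bivariate normal limit. You instead observe that with the propensity score known, $\widehat Y_N^L(\alpha)$ is literally the sample mean of the i.i.d.\ bounded vectors $\mb Z_i(\alpha)$, so the strong law and multivariate CLT apply directly once $E_{F_0}[\mb Z_i(\alpha)]=\vect\mu_0(\alpha)$ is established via the unbiasedness computation. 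Your route is shorter and uses no estimating-equation machinery; the paper's route, while heavier here, has the virtue of setting up exactly the stacked-equation framework that becomes indispensable in Theorem~\ref{theorem:asymptotic_normality_estimated_ps}, where the propensity score is estimated and the simple i.i.d.-average structure is lost.
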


The above theorem leads to the approximation
\( \displaystyle \widehat Y^L_N(\alpha) \sim
MVN_2 \left( \vect\mu_0(\alpha), N^{-1} V(\vect\mu_0(\alpha)) \right)
\) for large number of clusters.
Even if assumptions about $F_0$ are made, the elements of \( \displaystyle
V(\vect\mu_0(\alpha)) = Cov_{F_0} \left[ \left( \overline Y_i(0,\alpha), \overline Y_i(1,\alpha) \right)^T \right]
\) (see \cref{supp_sec:Vmu}) are often hard to calculate analytically. Instead, the asymptotic variance of $\widehat{Y}_N^L(\alpha)$ can be estimated using the empirical expectation
\[
\widehat{V}\left(\vect \mu \right) =
\frac 1N \sum_{i = 1}^N \left[\psi(\mb Y_i, \mb L_i, \mb A_i; \vect \mu)
\psi(\mb Y_i, \mb L_i, \mb A_i; \vect \mu)^T\right], \]
evaluated at $\vect \mu = \widehat Y_N^L(\alpha)$.
Under regularity conditions, discussed in \cite{Iverson1989}, $ \widehat{V}\left( \widehat Y_N^L(\alpha) \right)$ will be consistent for $V(\vect\mu_0(\alpha))$.
Using Theorem \ref{theorem:asymptotic_normality} one can acquire the asymptotic distribution of a contrast between $\widehat{Y}^L(0;\alpha)$, $\widehat{Y}^L(1;\alpha)$ specifying a direct effect, by an application of the multivariate delta method.

\subsection{Asymptotic results for $\widehat{Y}^L(a;\alpha)$ for estimated propensity score from a correctly-specified parametric model}

However, most of the times the propensity score is not known, and has to be estimated using the observed data.
In the next theorem, we provide the asymptotic distribution of $\widehat Y_N^L(\alpha)$ when the propensity score is estimated using a correctly specified parametric propensity score model. In this case, the cluster-propensity score for the observed treatment vector will be denoted by $\denominatorg$ where $\vect\gamma$ are the model parameters.

\begin{theorem}
\label{theorem:asymptotic_normality_estimated_ps}
Assume that assumptions \ref{ass:super_positivity}, \ref{ass:super_ignorability} hold, the outcome is bounded with probability 1 (as in Theorem \ref{theorem:asymptotic_normality}) and the parametric form of the propensity score model indexed by $\vect \gamma$, $f_{\mb A|\mb L, i}(\mb a_i | \mb l_i;\vect\gamma)$, is correctly specified and differentiable with respect to $\vect\gamma$.
Let $\vect\mu_0(\alpha)$ be as in Theorem \ref{theorem:asymptotic_normality}, and $\widehat Y_N^L(a,\alpha)$ calculated using consistent estimates $\widehat{\vect\gamma}$ of the propensity score $f_{\mb A| \mb L, i}$. Let $\vect\psi_\gamma(\mb l_i, \mb a_i; \vect\gamma) = \frac{\partial}{\partial \vect\gamma^T} \log f(\mb a_i | \mb l_i;\vect\gamma)$ be the score functions. Assume that:
\begin{enumerate}
	\item $\vect\gamma_0$ is in an open subset of the Euclidean space
    \item $\vect\gamma \rightarrow \vect\psi_\gamma(\mb l_i, \mb a_i; \vect\gamma)$ is twice continuously differentiable $\forall (\mb l_i, \mb a_i)$
    \item $E_{F_0} \left\| \vect\psi_\gamma(\mb L_i, \mb A_i; \vect\gamma_0) \right\|^2_2 < \infty$ \label{cond:exp_norm_squared}
    \item $E_{F_0} \left[ \overset{\cdot}{\vect\psi}_\gamma(\mb L_i, \mb A_i; \vect\gamma_0) \right] $ exists and is non-singular \label{cond:exp_derivative}
    \item $\exists$ measurable integrable function $\overset{\cdot \cdot}{\psi}_\gamma(\mb l_i, \mb a_i)$ fixed such that $\overset{\cdot \cdot}{\psi}_\gamma$ dominates the second partial derivatives of $\vect\psi_\gamma$ for all $\vect \gamma$ in a neighborhood of $\vect\gamma_0$. \label{cond:dominate}
\end{enumerate}
where $\vect\gamma_0$ are the true parameters of the propensity score model, and $\overset{\cdot}{\vect\psi}_\gamma(\mb l_i, \mb a_i; \vect\gamma)$ is the matrix of partial derivatives of $\vect \psi_\gamma(\mb l_i, \mb a_i; \vect\gamma)$ with respect to $\vect\gamma$. Then,
\( \displaystyle
\sqrt{n}\left(\widehat Y_N^L(\alpha) - \vect\mu_0(\alpha) \right)
\overset{d}{\rightarrow} N(0, W(\vect\gamma_0, \vect\mu_0(\alpha))),
\)
where
\begin{align*}
& W(\vect\gamma_0, \vect\mu_0(\alpha)) = V(\vect\mu_0(\alpha)) + A_{21}B_{11}^{-1}A_{21}^T + A_{21}B_{11}^{-1}B_{12} +
\left(A_{21}B_{11}^{-1}B_{12}\right)^T , \\
& A_{21} =  E \Big[ \partial\psi_0/\partial\vect\gamma \ \ \partial\psi_1/\partial\vect\gamma \big]^T, \ 
B_{11} = E\Big[ \vect\psi_\gamma \vect\psi_\gamma^T\Big],\\
& B_{12} = E\Big[\vect\psi_\gamma \psi_0, \vect\psi_\gamma\psi_1\Big],
\end{align*}
evaluated at $(\vect\gamma_0, \vect\mu_0(\alpha))$,
$\psi_a = \psi_{a, \alpha}(\mb Y_i, \mb A_i, \mb L_i ; \mu_0(a, \alpha))$
and $V(\vect\mu_0(\alpha))$ is that of Theorem \ref{theorem:asymptotic_normality}.
\end{theorem}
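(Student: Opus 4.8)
The plan is to recognize $(\widehat{\vect\gamma}, \widehat Y_N^L(\alpha))$ as a stacked $M$-estimator (Z-estimator), apply the standard sandwich-variance asymptotics, and then exploit the block-triangular structure of the problem to read off $W$. First I would assemble the joint estimating function. Since the parametric propensity model is fit by maximum likelihood, $\widehat{\vect\gamma}$ solves $\sum_{i=1}^N \vect\psi_\gamma(\mb L_i, \mb A_i; \vect\gamma) = 0$, while for each $a$ the estimator $\widehat Y_N^L(a;\alpha)$ solves $\sum_{i=1}^N \psi_{a,\alpha}(\mb Y_i, \mb L_i, \mb A_i; \mu_a) = 0$ with the weights evaluated at $\widehat{\vect\gamma}$. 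Stacking these gives a $(\dim\vect\gamma + 2)$-dimensional estimating function $g_i(\vect\gamma, \vect\mu) = (\vect\psi_\gamma, \psi_{0,\alpha}, \psi_{1,\alpha})^T$, unbiased at $(\vect\gamma_0, \vect\mu_0(\alpha))$: the score has mean zero under correct specification, and the $\mu$-components have mean zero by the unbiasedness already established for Theorem~\ref{theorem:asymptotic_normality} under Assumptions~\ref{ass:super_positivity} and \ref{ass:super_ignorability}. A crucial simplification is that $\psi_{a,\alpha}$ depends on $\vect\gamma$ only through the denominator $\denominatorg$, since the numerator $\numerator$ is the analyst-specified counterfactual allocation program and carries no $\vect\gamma$; positivity (Assumption~\ref{ass:super_positivity}) bounds this denominator away from zero, so the weights and their $\vect\gamma$-derivatives are bounded, and together with $|Y_{ij}|<M$ all relevant moments are finite.

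Next I would invoke a standard Z-estimator theorem for joint asymptotic normality, using conditions 1--5 as exactly its hypotheses. The open parameter set (1) and twice continuous differentiability (2) permit a second-order Taylor expansion of $\sum_i g_i$ about $(\vect\gamma_0,\vect\mu_0)$; the finite second moment of the score (3) supplies the central limit theorem for $N^{-1/2}\sum_i g_i$; nonsingularity of $E_{F_0}[\dot{\vect\psi}_\gamma]$ (4) gives an invertible derivative matrix $A = -E_{F_0}[\partial g_i/\partial(\vect\gamma,\vect\mu)^T]$; and the domination condition (5) controls the Taylor remainder uniformly in a neighborhood of $\vect\gamma_0$. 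The conclusion is $\sqrt N\big((\widehat{\vect\gamma},\widehat Y_N^L(\alpha)) - (\vect\gamma_0,\vect\mu_0(\alpha))\big) \overset{d}{\to} N(0, A^{-1} B A^{-T})$ with $B = E_{F_0}[g_i g_i^T]$.

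Then I would exploit the structure to extract the $\vect\mu$-block. Because $\vect\psi_\gamma$ does not involve $\vect\mu$, the matrix $A$ is block lower-triangular: its $\vect\mu$-diagonal block is the identity (each $\psi_{a,\alpha}$ is linear in $\mu_a$ with slope $-1$), its $(\vect\gamma,\vect\gamma)$ block equals $B_{11} = E[\vect\psi_\gamma\vect\psi_\gamma^T]$ by the information-matrix identity for the correctly specified model, and its off-diagonal block is $A_{21}$. Hence $A$ inverts in closed form. Partitioning $B$ conformably, with $B_{12}=E[\vect\psi_\gamma\psi_0,\ \vect\psi_\gamma\psi_1]$ and lower-right block $V(\vect\mu_0(\alpha))$ from Theorem~\ref{theorem:asymptotic_normality}, I would multiply out $A^{-1}BA^{-T}$ and take its lower-right $2\times 2$ block, which yields $W(\vect\gamma_0,\vect\mu_0(\alpha))$. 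The leading term $V$ is exactly the variance one would obtain with a known propensity score, while $A_{21}B_{11}^{-1}A_{21}^T$ and the symmetrized cross term $A_{21}B_{11}^{-1}B_{12}$ are the correction induced by estimating $\vect\gamma$.

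The main obstacle is the rigorous justification of the asymptotically linear (influence-function) representation of the IPW estimator when the weights are evaluated at the random $\widehat{\vect\gamma}$ rather than at $\vect\gamma_0$, i.e.\ controlling the remainder of the Taylor expansion of $\psi_{a,\alpha}(\cdot;\widehat{\vect\gamma})$ uniformly. This is where conditions (3) and (5) do the real work, and where the boundedness afforded by Assumption~\ref{ass:super_positivity} together with $|Y_{ij}|<M$ is essential to make the dominating function integrable and to guarantee $A_{21}$ is well defined. By comparison, verifying the information-matrix identity and carrying out the block-triangular inversion are routine.
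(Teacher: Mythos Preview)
Your proposal is correct and follows essentially the same route as the paper: both stack the score equations for $\vect\gamma$ with the two $\psi_{a,\alpha}$ estimating equations, invoke a standard $M$-estimator theorem (the paper uses Theorem~5.41 of van der Vaart, verifying the same four regularity conditions you list), and then exploit the block lower-triangular form of the derivative matrix together with the information identity $A_{11}=-B_{11}$ to extract the lower-right $2\times 2$ block of the sandwich. Your identification of the domination condition (5) combined with positivity and bounded outcomes as the key ingredient for controlling the remainder matches the paper's treatment of condition (iv), where the second partial derivatives $\partial^2\psi_{a,\alpha}/\partial\gamma_k\partial\gamma_l$ are bounded via a compactness argument on a closed ball around $\vect\gamma_0$.
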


$W(\vect\gamma_0,\vect\mu_0(\alpha))$ can be easily estimated using $\widehat W\left(\widehat{\vect\gamma}, \widehat Y_N^L(\alpha) \right)$, where $\widehat W\left(\vect\gamma, \vect\mu \right)$ is the matrix $W(\vect\gamma, \vect\mu)$ where all expectations are substituted with the empirical expectations. For example, $\widehat B_{11} = \frac1N \sum_{i = 1}^N \vect\psi_\gamma(\mb L_i, \mb A_i; \widehat{\vect\gamma}) \vect\psi_\gamma(\mb L_i, \mb A_i; \widehat{\vect\gamma})^T$.

%
Next, we derive the asymptotic distribution for $\widehat{\vect\mu}^{IE}(\alpha_0, \alpha_1) = \left(\widehat{Y}_N^L(0; \alpha_0), \ \widehat{Y}_N^L(0; \alpha_1) \right)^T$ for the estimated propensity score from a correctly specified parametric model.

\begin{theorem}
\label{theorem:asymptotic_normality_indirect}
If the assumptions of Theorem \ref{theorem:asymptotic_normality_estimated_ps} hold and for
$\vect\mu_0^{IE}(\alpha_0, \alpha_1) = \big(\mu_0(0, \alpha_0), $ $\mu_0(0, \alpha_2)\big)^T,$
\( \displaystyle
\sqrt{n}\left( \widehat{\vect\mu}^{IE}(\alpha_0, \alpha_1) - \vect\mu_0^{IE}(\alpha_0, \alpha_1) \right) \rightarrow N\big(0, Q(\vect\gamma_0, \vect\mu_0^{IE}(\alpha_0, \alpha_1))\big),
\)
where
\begin{align*}
& Q(\vect\gamma, \vect\mu) = D_{22} + C_{21}B_{11}^{-1}C_{21}^T + C_{21}B_{11}^{-1}D_{12} + \left(C_{21}B_{11}^{-1}D_{12}\right)^T \\
& D_{22} = Cov \bigg[ \Big(\overline Y_i^L(0, \alpha_1), \overline Y_i^L(0, \alpha_2) \Big)^T \bigg], \\ 
& D_{12} = E\left[\vect\psi_\gamma \psi_{0, \alpha_1}, \vect\psi_\gamma \psi_{0, \alpha_2}\right], \\
& C_{21} = E \big[ \partial\psi_{0, \alpha_1}/\partial\vect\gamma \ \ \partial\psi_{0, \alpha_2}/\partial\vect\gamma \big],
\end{align*}
and $B_{11}$ as in Theorem \ref{theorem:asymptotic_normality_estimated_ps}, evaluated at $(\vect\gamma_0, \vect\mu_0^{IE}(\alpha_1, \alpha_2))$.
\end{theorem}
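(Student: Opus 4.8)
The plan is to recognize the claim as another instance of the stacked M-estimation argument underlying Theorem \ref{theorem:asymptotic_normality_estimated_ps}, the only change being that the two outcome coordinates now share the treatment level $a=0$ while differing in the allocation parameter ($\alpha_1$ versus $\alpha_2$), rather than sharing $\alpha$ while differing in $a$. Concretely, $\widehat{\vect\gamma}$ solves the score equations $\sum_{i=1}^N \vect\psi_\gamma(\mb L_i, \mb A_i; \vect\gamma) = 0$ and each coordinate of $\widehat{\vect\mu}^{IE}$ solves $\sum_{i=1}^N \psi_{0,\alpha_k}(\mb Y_i, \mb L_i, \mb A_i; \mu) = 0$ for $k \in \{1,2\}$, where $\psi_{0,\alpha_k}$ depends on $\vect\gamma$ only through the denominator $f_{\mb A|\mb L,i}(\mb A_i|\mb L_i;\vect\gamma)$. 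Stacking $(\vect\gamma, \vect\mu^{IE})$ into a single estimating system, the expected Jacobian is block lower-triangular: the $\vect\gamma$-equations do not involve $\vect\mu^{IE}$, the upper-left block carries $E[\overset{\cdot}{\vect\psi}_\gamma]$, the lower-left block is $C_{21} = E[\partial\psi_{0,\alpha_1}/\partial\vect\gamma, \ \partial\psi_{0,\alpha_2}/\partial\vect\gamma]$, and the lower-right block is the negative identity because each $\psi_{0,\alpha_k}$ is affine in $\mu$ with slope $-1$.

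First I would check that conditions (1)--(5) of Theorem \ref{theorem:asymptotic_normality_estimated_ps}, together with super-population positivity and boundedness of the outcome, deliver exactly the regularity needed. Conditions (3)--(4) and the information identity give the asymptotically linear expansion $\sqrt{N}(\widehat{\vect\gamma} - \vect\gamma_0) = B_{11}^{-1} N^{-1/2}\sum_{i=1}^N \vect\psi_\gamma(\mb L_i,\mb A_i;\vect\gamma_0) + o_p(1)$; positivity bounds $1/f_{\mb A|\mb L,i}$ and hence keeps each $\psi_{0,\alpha_k}$ and its $\vect\gamma$-derivative square-integrable; and the domination condition (5) with the twice-differentiability in (2) licenses the mean-value step that linearizes $\psi_{0,\alpha_k}(\cdot;\widehat{\vect\gamma})$ about $\vect\gamma_0$. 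A first-order expansion of the $\mu$-equations about $(\vect\gamma_0, \vect\mu_0^{IE})$, combined with the expansion of $\widehat{\vect\gamma}$ and inversion of the block-triangular Jacobian, then yields $\sqrt{N}(\widehat{\vect\mu}^{IE} - \vect\mu_0^{IE}) = N^{-1/2}\sum_{i=1}^N\{\psi^{IE}_i + C_{21} B_{11}^{-1}\vect\psi_\gamma(\mb L_i,\mb A_i;\vect\gamma_0)\} + o_p(1)$, where $\psi^{IE}_i = (\psi_{0,\alpha_1}, \psi_{0,\alpha_2})^T$ is evaluated at the truth.

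Applying the i.i.d. central limit theorem to this sum and expanding the covariance of the summand gives three contributions: the variance of $\psi^{IE}_i$, which by the same identity used in Theorem \ref{theorem:asymptotic_normality} (see \cref{supp_sec:Vmu}) equals $D_{22}$; the variance of the correction term, which is $C_{21} B_{11}^{-1} E[\vect\psi_\gamma \vect\psi_\gamma^T] B_{11}^{-1} C_{21}^T = C_{21} B_{11}^{-1} C_{21}^T$ since $B_{11} = E[\vect\psi_\gamma \vect\psi_\gamma^T]$ is symmetric; and the two cross terms $C_{21} B_{11}^{-1} D_{12}$ and its transpose, with $D_{12} = E[\vect\psi_\gamma\psi_{0,\alpha_1}, \ \vect\psi_\gamma\psi_{0,\alpha_2}]$. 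Summing these reproduces $Q(\vect\gamma_0, \vect\mu_0^{IE}(\alpha_1,\alpha_2))$ exactly.

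I expect no genuinely new obstacle relative to Theorem \ref{theorem:asymptotic_normality_estimated_ps}; the only point requiring care is confirming that every moment and differentiability bound used there holds coordinatewise with $\alpha_1$ and $\alpha_2$ in place of a single $\alpha$, which is immediate because those bounds were established for an arbitrary fixed allocation parameter. The substantive technical step --- the uniform linearization of the plugged-in estimating function $\psi_{0,\alpha_k}(\cdot;\widehat{\vect\gamma})$ guaranteed by conditions (2) and (5) --- is identical to the one in Theorem \ref{theorem:asymptotic_normality_estimated_ps}, and the remainder is the block-matrix bookkeeping that produces $Q$ from $(C_{21}, D_{12}, D_{22})$ in precisely the way $W$ arose from $(A_{21}, B_{12}, V)$.
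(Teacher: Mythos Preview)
Your proposal is correct and follows precisely the route the paper intends: the paper does not give a separate proof of this theorem, treating it as immediate from the argument of Theorem~\ref{theorem:asymptotic_normality_estimated_ps} with the two outcome coordinates $(\psi_{0,\alpha_1},\psi_{0,\alpha_2})$ replacing $(\psi_{0,\alpha},\psi_{1,\alpha})$. Your stacked estimating-equation setup, block lower-triangular Jacobian, and the resulting sandwich computation producing $D_{22}+C_{21}B_{11}^{-1}C_{21}^T+C_{21}B_{11}^{-1}D_{12}+(C_{21}B_{11}^{-1}D_{12})^T$ mirror exactly the block-matrix manipulation the paper carries out for $W(\vect\gamma_0,\vect\mu_0(\alpha))$ in the proof of Theorem~\ref{theorem:asymptotic_normality_estimated_ps}.
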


\subsection{Estimators and asymptotic results for the population-level estimands}

Similar arguments lead to estimators of the $F_\alpha$-group and population average potential outcome in (\ref{eq:Falpha_estimand}) as
\begin{align*}
	\widehat{Y}_i^L(a ; F_\alpha) = & \int \widehat{Y}_i^L(a;\alpha) \mathrm{d} F_\alpha
    = \frac 1{n_i} \sum_{j = 1}^{n_i}
	\frac{\numeratorF}{\denominator} I(A_{ij} = a) Y_{ij}, \\
    \widehat{Y}^L(a, F_\alpha) = & \int \widehat Y^L(a;\alpha) \mathrm{d} F_\alpha = 
    \frac1N \sum_{i = 1}^{n_i} \widehat{Y}_i^L(a ; F_\alpha) 
\end{align*}
accordingly, where 
\begin{align*}
& \numeratorF = \int \numerator \ \mathrm{d} F_\alpha(\alpha).
\end{align*}

Assume that $F_\alpha^1, F_\alpha^2$ represent discrete distributions with values $\alpha_1, \alpha_2, \dots, \alpha_K \in (0, 1)$ and probability $p_{1k}$ and $p_{2k}$  of assigning value $\alpha_k$ to a cluster accordingly, such that
\(
\sum_{k = 1}^K p_{jk} = 1, \ j = 1, 2.
\)
Then,
\begin{align*}
\overline Y\left(0, F_\alpha^j \right) = \sum_{k=1}^K p_{jk} \overline Y(0, \alpha_k)\  \Rightarrow 
IE\left(F_\alpha^1, F_\alpha^2 \right) = \sum_{k = 1}^K (p_{2k} - p_{1k}) \overline Y(0,\alpha_k).
\end{align*}
Clearly, a consistent estimator for the super-population counterpart of the indirect effect $IE\left(F_\alpha^1, F_\alpha^2 \right) =
E_{F_0} \big[ \overline Y_i(a;F_\alpha^1)\big] - E_{F_0} \big[ \overline Y_i(a;F_\alpha^2)\big]$ is
\[
\widehat{IE} \left(F_\alpha^1, F_\alpha^2 \right) = \sum_{k = 1}^K (p_{2k} - p_{1k}) \widehat Y(0,\alpha_k).
\]
Acquiring the asymptotic distribution of $\widehat{IE} \left(F_\alpha^1, F_\alpha^2 \right)$ is straightforward following similar arguments to the ones in Theorem \ref{theorem:asymptotic_normality_indirect} to acquire the asymptotic distribution of $\big( \widehat Y(0,\alpha_1),$ $\widehat Y(0, \alpha_2), \dots, \widehat Y(0, \alpha_K) \big)^T$ and applying the multivariate delta method.


\section{Simulations}
\label{sec:simulations}

We generate a fixed population of 2,000 clusters including 14 to 18 units each, resulting to a total of 31,553 units.
Four independent $N(0, 1)$ covariates were generated, and are denoted as $L_1, L_2, L_3, L_4$. For every individual in the population (unit $j$ in cluster $i$), the potential outcomes under all possible treatment allocations were generated, following a model $Y \sim $ Bernoulli$(\mathrm{expit}(l_Y))$ where
\begin{align*}
l_Y =& 0.5 - 0.6a - 1.4 \frac{a + k}{n_i} - 0.098 L_{1ij} -0.145 L_{2ij} + 0.1 L_{3ij} +  0.3 L_{4ij} +  0.351 a \frac{a + k}{n_i},
\numberthis
\label{eq:sim_po}
\end{align*}
$L_{1ij}, L_{2ij}, L_{3ij}, L_{4ij}$ are the values of the covariates for observation $j$ of cluster $i$, $a$ is the individual treatment, $k$ is the number of treated neighbors, and $(a + k) / n_i$ is the percentage of units in the cluster that are treated.

\subsection{A simulated data set}

The simulations test the operating characteristics of the estimator in (\ref{eq:est_pop_aver_po}) using the true and estimated propensity score in terms of the re-sampling of the observed treatment vector. Specifically, each simulated dataset includes the whole population, but a different set of potential outcomes is observed according to a treatment vector generated as
$A_{ij}\sim$ Bernoulli$(\mathrm{expit}(l_A))$ where
\begin{align*}
l_A = & - 0.2 + b_i + 0.3 L_{1ij} - 0.15 L_{2ij} + 0.2 L_{3ij} -0.18 L_{4ij},\  b_i \sim N(0, 0.5^2).
\numberthis
\label{eq:sim_ps}
\end{align*}
Once the observed treatment is generated, the observed outcome is the corresponding value of the potential outcomes.

$\widehat Y_i(a;\alpha)$ is estimated from (\ref{eq:est_group_average_cond}) for $\counteralloc$ described in \cref{subsec:sim_Palpha}, and
\begin{align*}
\denominatorg = \int \prod_{j = 1}^{n_i}
f_e\left(A_{ij} | L_{ij}, \delta_0, \beta_i, \vect\delta \right) \phi \left(\beta_i | \sigma^2_\beta\right) \mathrm{d} \beta_i,
\end{align*}
where
\begin{align*}
f_e\left(A_{ij} | L_{ij}, \delta_0, \beta_i, \vect\delta \right) = & \ 
\mathrm{expit} \left(\delta_0 + b_i + L_{ij}^T\vect{\delta} \right)^{A_i} 
\left[ 1 - \mathrm{expit} \left( \delta_0 + b_i + L_{ij}^T\vect{\delta} \right) \right]^{1 - A_i},
\end{align*}
$L_{ij}^T = (L_{1ij}, L_{2ij}, L_{3ij}, L_{4ij})$, $\phi\left(\cdot;\sigma^2_\beta\right)$ the density of a 
$N\left(0, \sigma^2_\beta\right)$, and
$\vect\gamma = \left(\delta_0, \vect\delta, \sigma^2_\beta \right)$ known and equal to the coefficients in (\ref{eq:sim_ps}), or the maximum likelihood estimates from the correctly specified propensity score model.

We calculate the population average potential outcomes, direct and indirect effects, and the corresponding asymptotic variances.

\subsection{Covariate-dependent counterfactual treatment allocation}
\label{subsec:sim_Palpha}

The counterfactual treatment allocation $P_{\alpha, L}$ is allowed to depend on the same covariates that are included in the observed propensity score, using the log odds coefficients used to generate the observed treatment. Specifically, for a fixed $\alpha \in (0, 1)$,
\begin{align*}
& \mathrm{logit}P_{\alpha, L}(A_{ij} = 1|L_{ij}) = 
\xi_i^\alpha + 0.3 L_{1ij} - 0.15 L_{2ij} + 0.2 L_{3ij} -0.18 L_{4ij},
\end{align*}
for $\xi_i^\alpha$ satisfying $\frac1{n_i}\sum_{j = 1}^{n_i} P_{\alpha, L}(A_{ij} = 1|L_{ij}) = \alpha$. (Description of how $\xi_i^\alpha$ is calculated can be found in \cref{supp_sec:computation}.)


\subsection{Calculating the true average potential outcomes}
\label{subsec:sim_true_po}

For every observation $j$ in cluster $i$, the individual average potential outcome for individual treatment $a$ and for cluster-average propensity of treatment $\alpha$ is calculated based on (\ref{eq:ind_aver_po}).
%
Based on the individual average potential outcome, the true group and population average potential outcome are calculated according to (\ref{eq:group_aver_po}), (\ref{eq:pop_aver_po}).

\subsection{Simulation results}

We present results for values of $\alpha \in (0.25, 0.65)$ corresponding to the $10^{th}$ and $90^{th}$ quantiles of the distribution of the observed treatment proportions across clusters and simulated data sets. As expected, the estimator based on the true propensity score is unbiased, while the estimator based on the estimated propensity score, which is consistent but not unbiased, indicates small biases. Figure \ref{fig:sim_results} shows the mean estimate across 500 simulated data sets for the population average potential outcome for $a = 1$ (results were similar for $a = 0$), direct and indirect effect, whereas \cref{fig:sim_coverage} and \cref{tab:sim_coverage} depict the coverage of the estimators based on the true and estimated propensity score over different values of $\alpha$.

\begin{figure}
\centering
\includegraphics[width = \textwidth]{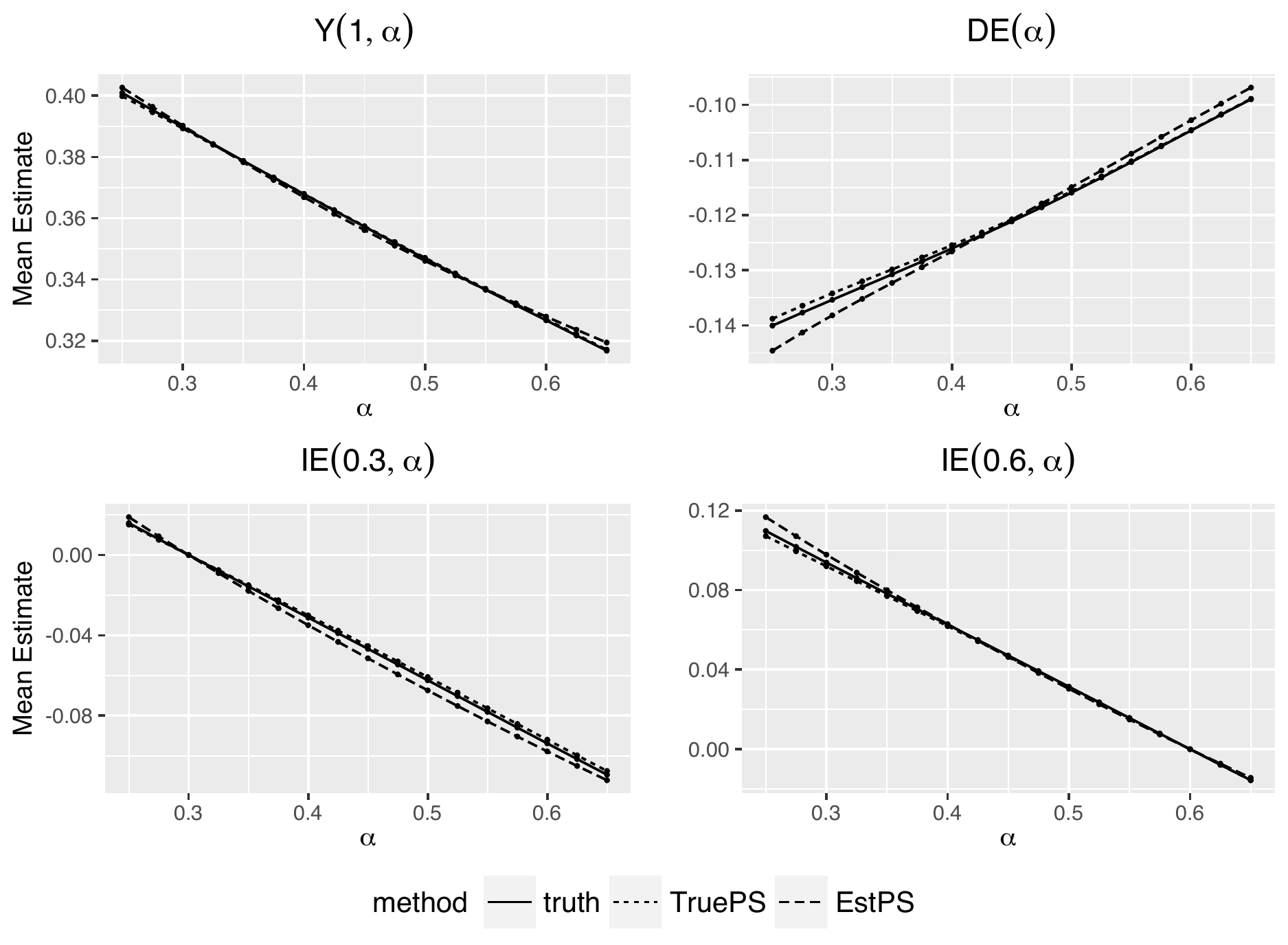}
\caption{Mean estimate of population average potential outcome, direct, and indirect effect over 500 simulated data sets for the true or the correctly specified propensity score.}
\label{fig:sim_results}
\end{figure}

\begin{figure}
\centering
\includegraphics[width = \textwidth]{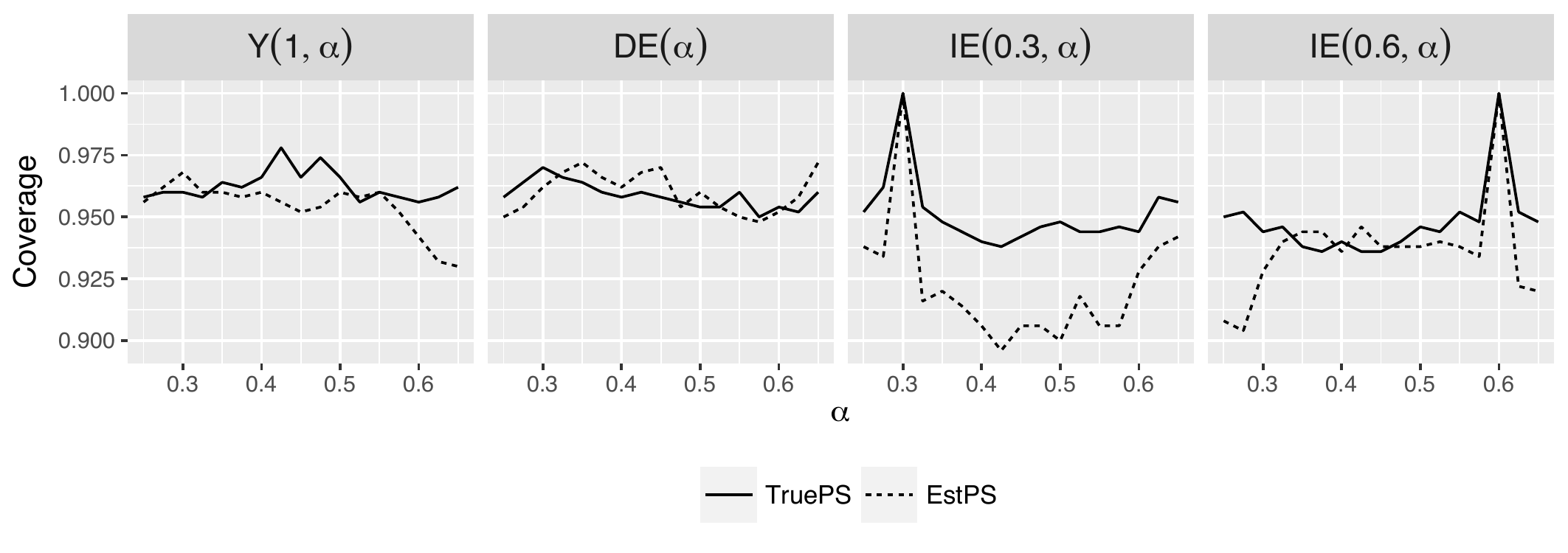}
\caption{Estimated coverage of population average potential outcome, direct and indirect effect over 500 simulated data sets for the true and the correctly specified propensity score model as a function of $\alpha$.}
\label{fig:sim_coverage}
\end{figure}

\begin{table}
	\centering
	\caption{Range of percent coverage over 500 simulated data sets for the population average potential outcome, direct and indirect effects for the true or the correctly specified propensity score.}
	\label{tab:sim_coverage}
	\begin{tabular}{cccccc}
		\hline
		PS & $\overline Y^L(0;\alpha)$ & $\overline Y^L(1;\alpha)$ & $DE^L(\alpha)$ & $IE^L(\alpha_1, \alpha_2)$ \\
		\hline
		True & 94.2 - 96.4 & 95.6 - 97.8 & 95 - 97 & \phantom{.0}93 - 96.2 \\
		Estimated & \phantom{.0}93 - 96.4 & \phantom{.0}93 - 96.8 & 94.8 - 97.2 & 88.2 - 95\phantom{.0} \\
		\hline
	\end{tabular}
\end{table}

Moreover, \cref{fig:sim_variance} in \cref{app_sec:sims} compares the mean of the estimated variance based on the asymptotic results against the variance of the estimates calculated over the 500 simulated data sets, indicating that, on average, the variance based on the asymptotic theory is a good approximation of the true variance.

\section{Application: Effectiveness of Power Plant Emissions Controls for Reducing Ambient Ozone Pollution}
\label{sec:data_application}

Limited literature exists in the evaluation of U.S. air pollution regulations in a causal inference framework. Power plant regulations for the reduction of NO$_x$ emissions have been predicated on the knowledge that reducing NO$_x$ emissions would lead to a subsequent reduction in ambient ozone. Among various NO$_x$ emission reduction strategies, SCR and SNCR are believed to be the most effective in reducing emissions. While work in \cite{Papadogeorgou2018} corroborated this effectiveness of SCR and SNCR in an analysis for NO$_x$ emissions, the analysis of ambient ozone pollution in that paper ignores the possibility of interference and estimates a null effect on ambient ozone.  However, interference is a key component in the study of air pollution: ambient pollution concentrations near a power plant will depend on the treatment levels of other nearby power plants. Causal estimands tailored to settings of interference can answer important questions related to the effectiveness of interventions in the presence of long-range pollution transport.  

We use the same data as in \cite{Papadogeorgou2018} to estimate direct and indirect effects of SCR/SNCR against alternatives on ambient ozone under realistic counterfactual programs. The publicly-available data set includes 473 coal or gas burning power generating facilities in the U.S. operating during June, July and August 2004, with covariate information on power plant characteristics, weather and demographic information of the surrounding areas. For every power plant, the value of ozone is calculated as the average across EPA monitoring locations within 100km of the 4$^{th}$ highest ozone measurements. See \cite{Papadogeorgou2018} for a full description of the data set and linkage.

Power plant facilities are grouped into 50 clusters according to Ward's agglomerative clustering method \citep{Ward1963} based on coordinates. The grouping and treatment of facilities are depicted in Figure \ref{fig:trt_clusters}.

\begin{figure}
\centering
\includegraphics[width = 0.75\textwidth]{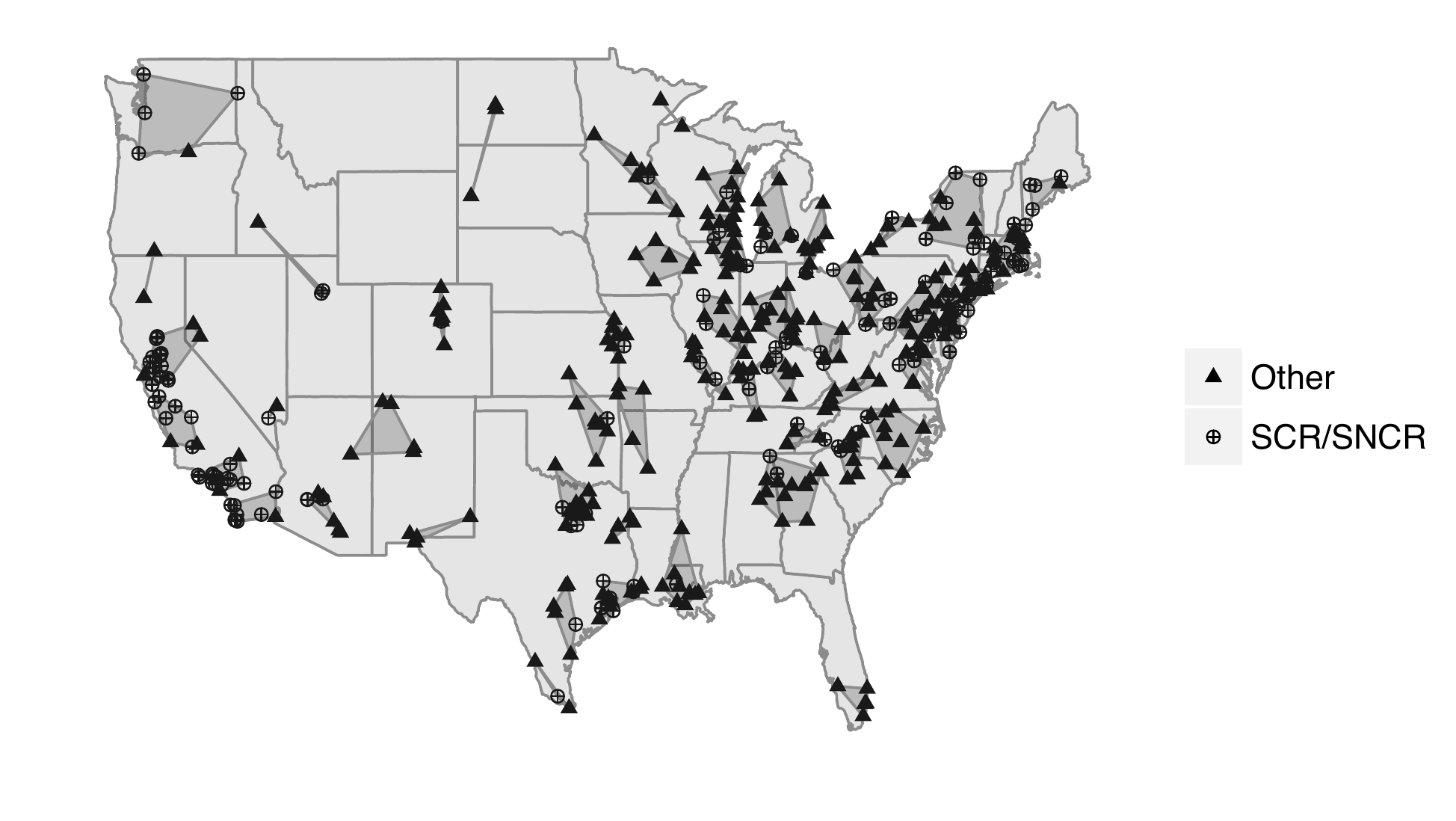}
\caption{Treated (SCR/SNCR) and control (Other) power plant facilities during June, July, August of 2004. Shaded areas depict the interference clusters according to the agglomerative clustering method.}
\label{fig:trt_clusters}
\end{figure}

\subsection{Plausibility of the ignorability and positivity assumption}

While regulatory programs provide incentives to install emission-control technologies, power plants have latitude to select which (if any) technology to adopt.  Such decisions are largely determined by the plant's characteristics such as plant size and operating capacity, as well as by factors related to local or regional air pollution incentives that are influenced by area-level characteristics such as population density and urbanicity.  To capture such factors, 18 covariates are included in the data set describing power plant, weather, and demographic characteristics, based on which ignorability is expected to hold. 
The variability in the observed proportion of treated power plants across clusters provides an additional indication that the positivity assumption is plausible.
Based on these covariates, the propensity score was modeled as in \cite{Papadogeorgou2018} augmented with a cluster-specific random effect
\begin{equation}
\mathrm{logit} P(A_{ij} = 1 | L_{ij}, b_i) = \delta_0 + b_i + L_{ij}^T \vect{\delta}, \ b_i \sim N(0, \sigma^2_b).
\label{eq:data_ps}
\end{equation}

\subsection{Counterfactual treatment allocation for the installation of SCR/SNCR emission control technologies}
Recall from \cref{sec:choose_numerator} that $\counteralloc$ governing treatment assignment in the counterfactual allocation programs of interest must be specified. To specify counterfactual treatment allocations that reflect realistic relationships between covariates and the propensity to adopt treatment, we specify $\counteralloc$ such that the log-odds of treatment installation related to individual covariates are as observed in the propensity score model for the observed treatment in (\ref{eq:data_ps}). Even though this choice of $P_{\alpha, L}$ depends on the data through the estimated log-odds, the corresponding estimands are well-defined and the asymptotic results are valid for $P_{\alpha, L}$ fixed across replications of the sampling or an increasing number of clusters.

Values of $\alpha$ were considered between the $20^{th}$ and $80^{th}$ quantiles of the observed cluster treatment proportions, corresponding to $\alpha \in [0.073, 0.458]$. Figure \ref{fig:app_results} shows the population direct effect $DE(\alpha)$, and population indirect effect $IE(\alpha_1, \alpha_2)$ for a subset of values of $\alpha_1$ (for presentation simplicity). The direct effect is significantly negative for all values of $\alpha \geq 0.12$, but has a somewhat increasing trend, implying that in a world where the average probability of SCR/SNCR among power plants in a cluster is fixed, the installation of SCR/SNCR at one power plant would lead to significant reductions in ozone concentrations in the surrounding area, but these reductions are smaller when the cluster average propensity of treatment is high (larger number of treated neighbors).

\begin{figure}
\centering
\includegraphics[width = \textwidth]{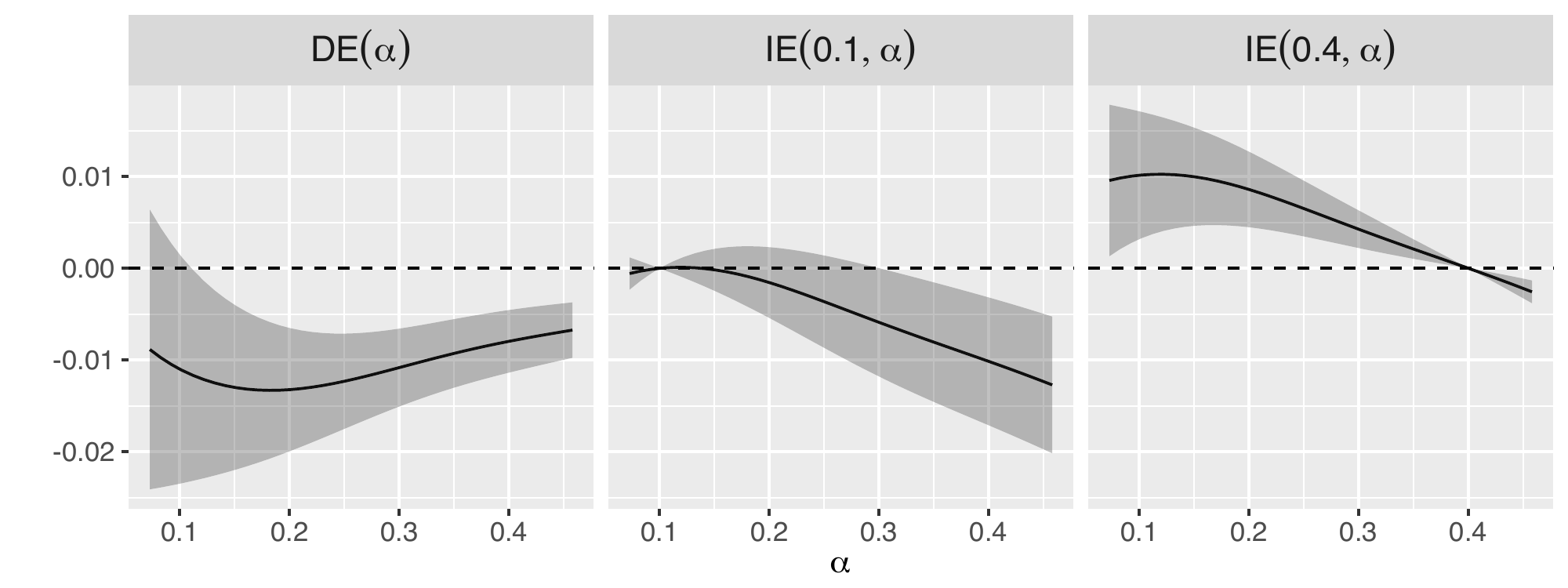}
\caption{Direct effect of control versus treated power plants on ozone concentrations as a function of $\alpha$, and indirect effect where the first value of $\alpha$ is fixed to a specific value. Ozone is measured in parts per million.}
\label{fig:app_results}
\end{figure}

The indirect effect is, in a way, a measure of pollution transport since it quantifies the effect of changes in the cluster average propensity of treatment on ozone concentrations near control power plants. For all values of $\alpha_1$, $IE(\alpha_1, \alpha_2)$ is almost always decreasing in $\alpha_2$, and most contrasts considered for which $\alpha_2 > 0.15$ were significant at the 0.05 significance level. The decreasing trend in $IE(\alpha_1, \alpha_2)$ for a fixed value of $\alpha_1$ implies that higher cluster-average SCR/SNCR propensity leads to further decrease in ambient ozone concentrations in the surrounding area of power plants without SCR/SNCR systems.

Next, we considered estimating the effect of hypothesized federal regulations that would shift the distribution of cluster-average propensity of treatment. $F_\alpha^1$ ($F_\alpha^2$) was assumed to be a discrete distribution within the $20^{th}$ ($50^{th}$) and $80^{th}$ quantiles of the observed cluster-treatment proportions. In Figure \ref{fig:app_Falphas}, we show the empirical probability mass function, as well as the two counterfactual treatment allocations. $IE\left(F_\alpha^1, F_\alpha^2 \right)$ was estimated to be $-0.0036$ parts per million (95\% CI: 
$-0.0059$ to $-0.0013$) implying that federal regulations that encourage the installation of SCR/SNCR
enough to bring the cluster average treatment propensities distribution from falling between the $20^{th}$ and $80^{th}$ percentiles of the observed cluster coverage distribution to falling between the $50^{th}$ and $80^{th}$ percentiles of the observed cluster coverage distribution, would lead to ambient ozone concentrations surrounding control power plants that are on average 0.0036 parts per million lower. For reference, these effect estimates can be compared against the national ozone air quality standard of 0.07 parts per million.

\begin{figure}
\centering
\includegraphics[width = \textwidth]{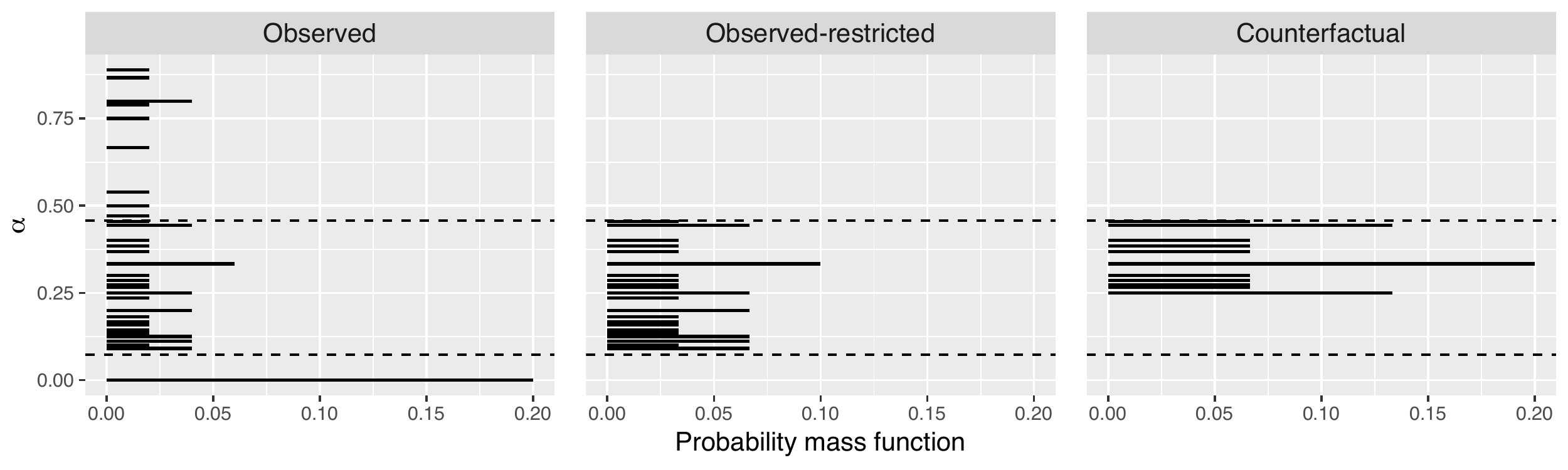}
\caption{Observed cluster treatment proportions (``Observed''), and two discrete hypothesized distributions of cluster-average probability of treatment. One corresponds to the observed restricted within the $20^{th}$ and $80^{th}$ quantiles of the observed cluster treatment proportions (``Observed-restricted''), and the other one (``Counterfactual'') corresponds to the observed (or the Observed-restricted) further restricted between the $50^{th}$ and $80^{th}$ quantiles of the observed cluster treatment proportions.}
\label{fig:app_Falphas}
\end{figure}

We explored the sensitivity of the results to the choice of hierarchical clustering method and number of clusters, and saw that the qualitative results for the effectiveness of SCR/SNCR emission reduction technologies are mostly consistent with negative estimated direct effects and decreasing indirect effect curves. 
These results can be found in \cref{app_sec:data_application}, along with links to the publicly available data set, R package and scripts.


\section{Discussion}
\label{sec:discussion}

Analyzing data in the context of interference disentangles the effect of the individual treatment from the treatment of one's neighbors. New estimands in the presence of interference were proposed for counterfactual strategies that manipulate treatment at the cluster-level, or at the level of population of clusters. These new estimands represent scenarios where individual treatment in the counterfactual world is allowed to depend on covariates and the treatment of one's neighbors. Such estimands are relevant for public health interventions that do not manipulate treatment at the unit level.

For the estimands referring to interventions at the population level, the counterfactual distribution $F_\alpha$ represented the distribution of the cluster-average propensity of treatment, and each cluster was assumed to be equally likely to receive $\alpha$ from $F_\alpha$. Alternative specifications could consider $F_\alpha$ to depend on cluster-level covariates that act as predictors of cluster-average propensity of treatment.  Further development could consider counterfactual treatment allocation strategies that manipulate the relationship between covariates and treatment assignment to reflect, for example, interventions for which larger power plants receive higher penalties for over-emission.

Consistent estimators were proposed for which the asymptotic distribution was derived. These estimators were employed in the comparative effectiveness of power plant emission control strategies on ambient ozone, and showed the potential of a set of emission reduction technologies in reducing ozone concentrations. These results are more in line with subject-matter knowledge than results from a previous study that assumed no interference.

While the power plant analysis showed the potential for causal inference methods for interference to lead to important results in air pollution research, there are several limitations worth noting. First of all, the number of clusters was low, raising questions for the appropriateness of use of asymptotic distributions to acquire variance estimates. Furthermore, the assumption of partial interference may be violated, since pollution from one power plant can travel long enough distances to affect ozone concentrations in a different cluster. Despite these approximations, the analysis of the air quality data entails important novelty in its own right, as it advances analysis methods for studies of air pollution interventions and introduces formalization of interference into a realm where it has not, to our knowledge, been previously considered. Further methods development, in particular towards relaxing the assumption of partial interference for unknown networks, is an important topic for future research.


\section*{Acknowledgements}
Funding for this work was provided by National Institutes of Health R01ES026217, USEPA 83587201-0, and Health Effects Institute 4953-RFA14-3/16-4. The contents of this work are solely the responsibility of the grantee and do not necessarily represent the official views of the USEPA. Further, USEPA does not endorse the purchase of any commercial products or services mentioned in the publication. The authors thank Dr. Christine Choirat for tools created to reproducibly manage and link the disparate (publicly-available) data used for this analysis.

\section*{Appendices}
\appendix

\section{Simulation results}
\label{app_sec:sims}
\begin{figure}[H]
	\centering
	\includegraphics[width = 0.8\textwidth]{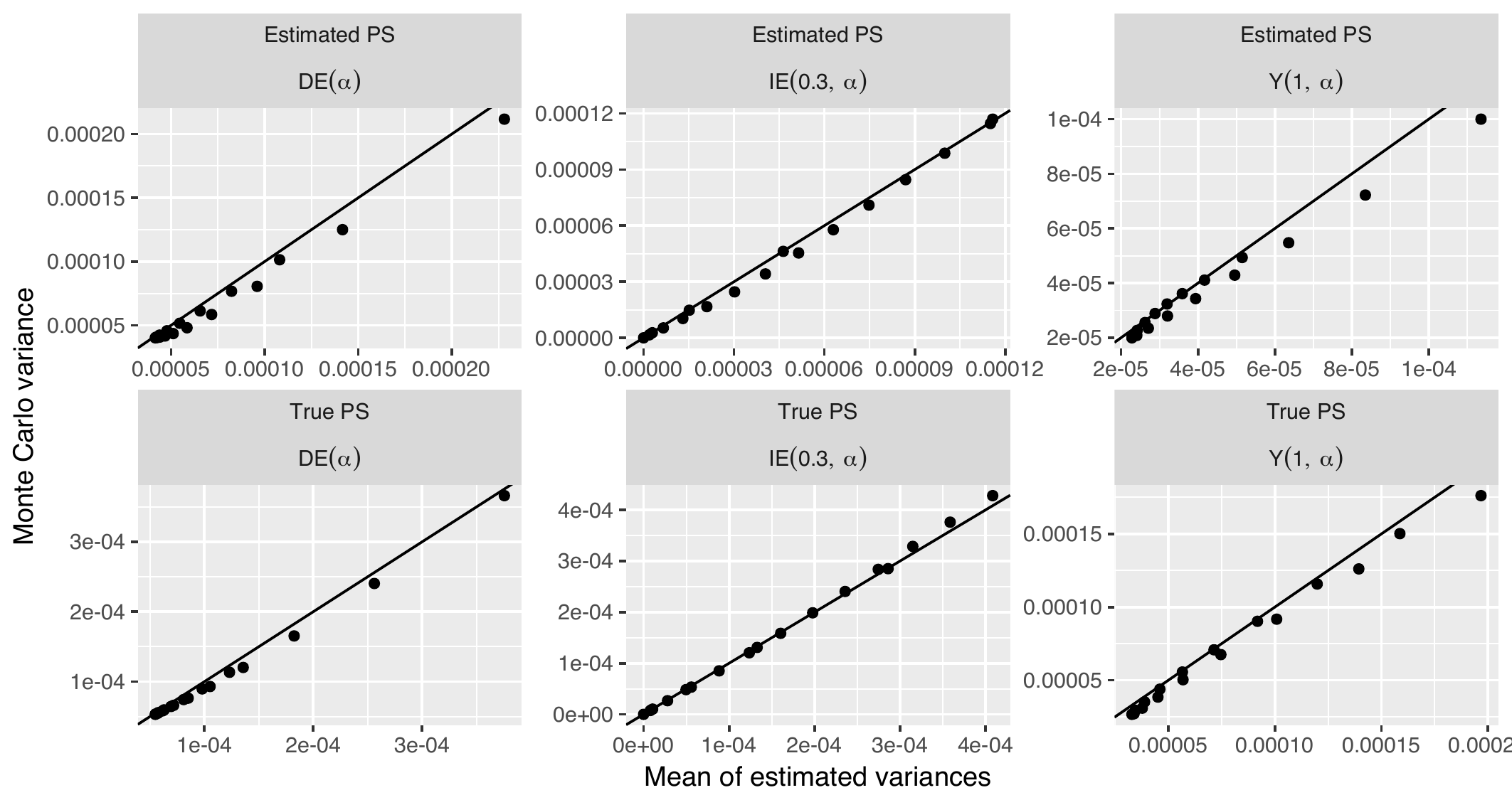}
	\caption{Mean estimated variance from the asymptotic distribution, and Monte Carlo variance of the estimates. The diagonal lines correspond to the 45 degree line, and each point corresponds to a value of $\alpha$.}
	\label{fig:sim_variance}
\end{figure}


\section{Data application}
\label{app_sec:data_application}

Link to the publicly available data, the R package implementing the estimators, and scripts replicating the results of the data analysis are available at 
\url{https://osf.io/7dp8c/} (page will be made public upon acceptance).

\subsection{Sensitivity of data application results to the choice of clustering}
\label{app_sec:app_sensitivity}

Rows  correspond to the direct effect $DE(\alpha)$ and the indirect effects $IE(\alpha_1, \alpha_2)$ for $\alpha_1 \in \{0.1, 0.4\}$. Columns correspond to the clustering method and correspond to Ward's \cite{Ward1963} method for 30 and 70 clusters, and complete clustering with 50 clusters. The decreasing trend in the indirect effect persists mostly for all clustering specifications, and the direct effect estimates are consistently negative.

\begin{figure}[H]
	\centering
	\includegraphics[width = \textwidth]{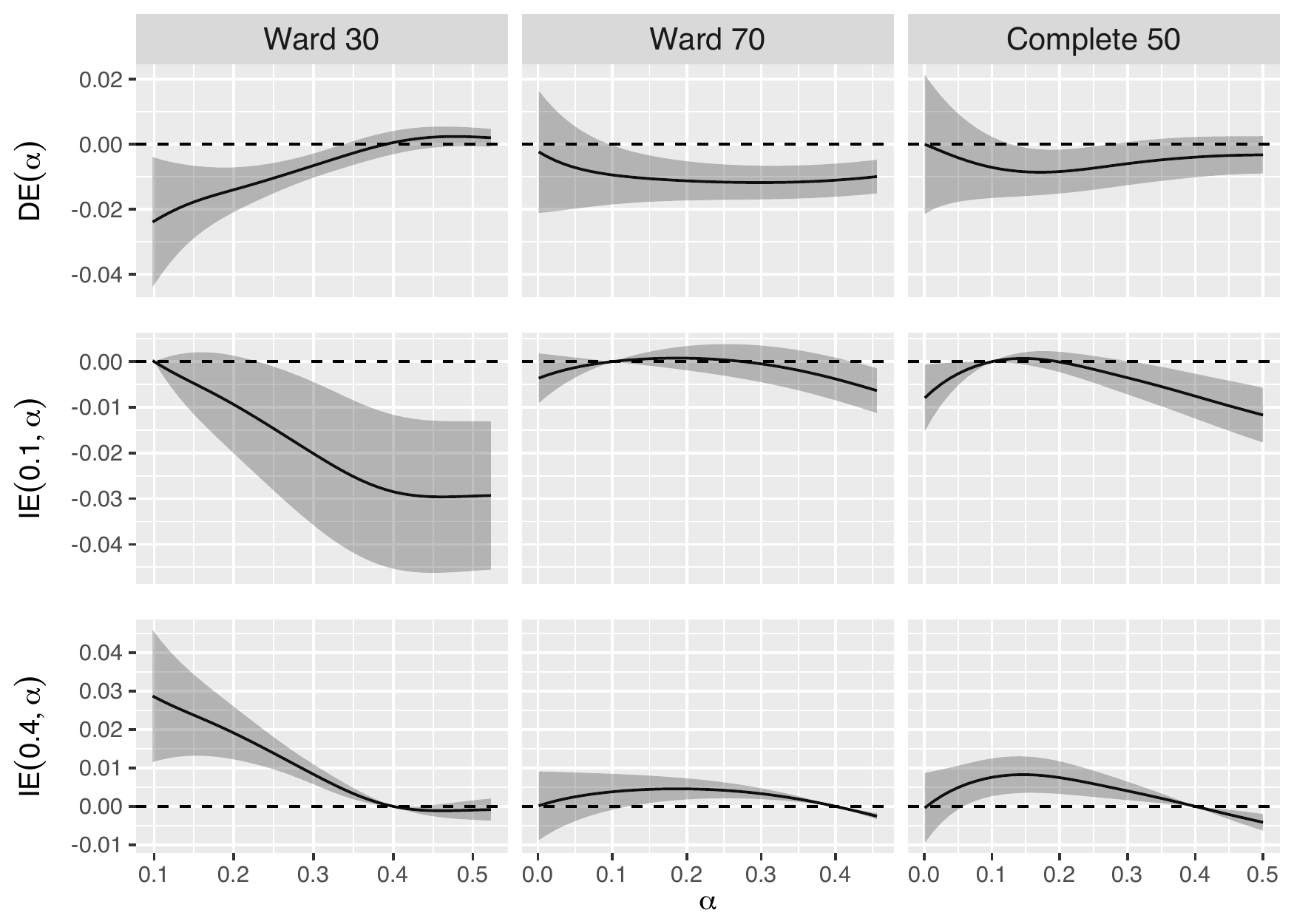}
	\caption{Direct and indirect effect of SCR/SNCR on ambient ozone using different clustering of power plants. Methods for clustering from left to right include Ward's method for 30 and 70 clusters, and complete clustering using 50 clusters.}
\end{figure}





\section{Proofs of unbiasedness, consistency and asymptotic normality}
\label{supp_sec:proofs}

\subsection{Unbiasedness}

\begin{theorem}
	If $f_{\mb A|\mb L, i}(\cdot | \mb L_i)$ is known, and Assumptions \ref{ass:group_positivity}, \ref{ass:group_ignorability} hold, then $\widehat{Y}^L_i(a ; \alpha)$ is an unbiased estimator for the group average potential outcome, and $\widehat{Y}^L(a;\alpha)$ is an unbiased estimator of the population average potential outcome for individual treatment $a$ and cluster average propensity of treatment $\alpha$.
	\label{theorem:unbiased_cond_alpha}
\end{theorem}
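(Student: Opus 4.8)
The plan is to prove unbiasedness by linearity of expectation, reducing the claim to a single summand and then propagating it through the group and population averages. Since both $\widehat Y_i^L(a;\alpha)$ and $\widehat Y^L(a;\alpha)$ are unweighted averages of the per-unit inverse-probability-weighted terms, it suffices to establish that, for each $i$ and $j$,
\[
E\!\left[ \frac{\numerator}{\denominator}\, I(A_{ij}=a)\, Y_{ij} \,\Big|\, \mb L_i \right] = \overline{Y}_{ij}^L(a;\alpha),
\]
where the expectation is taken over the observed treatment vector $\mb A_i$ for the fixed set of clusters. Averaging this identity over $j = 1, \dots, n_i$ yields unbiasedness for the group average potential outcome in (\ref{eq:group_aver_po}), and a further average over $i = 1, \dots, N$ gives unbiasedness for the population quantity in (\ref{eq:pop_aver_po}).

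To establish the per-unit identity, I would first invoke consistency to replace the observed $Y_{ij}$ by the potential outcome under the realized treatment, so that $I(A_{ij}=a) Y_{ij} = I(A_{ij}=a)\, Y_{ij}(A_{ij}=a, \mb A_{i,-j})$. Next I would expand the conditional expectation as a sum over $\mb a_i \in \mathcal{A}(n_i)$ against the cluster-propensity score $f_{\mb A|\mb L,i}(\mb a_i \mid \mb L_i)$; the indicator collapses this to a sum over $\mb s \in \mathcal{A}(n_i-1)$ with $A_{ij}$ held at $a$. The weight $\numerator / \denominator$ is a deterministic function of $\mb L_i$, because the propensity score is assumed known and $\counteralloc$ is a pre-specified counterfactual allocation; Assumption \ref{ass:group_positivity} (positivity) guarantees the denominator is strictly positive, so each weight is finite and the estimator is well-defined.

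The crux is the cancellation step. For each $\mb s$, the observed-treatment probability appearing in $\denominator$ equals $P(\mb A_i = (A_{ij}=a, \mb A_{i,-j}=\mb s) \mid \mb L_i)$, which multiplies the weight when the expectation is written out; Assumption \ref{ass:group_ignorability} (ignorability) ensures that this assignment probability does not depend on the potential outcomes given $\mb L_i$, so it factors out cleanly and cancels the denominator of the weight. What remains is $\sum_{\mb s \in \mathcal{A}(n_i-1)} \counteralloc(\mb A_{i,-j}=\mb s \mid A_{ij}=a, \mb L_i)\, Y_{ij}(A_{ij}=a, \mb A_{i,-j}=\mb s)$, which is exactly the individual average potential outcome in (\ref{eq:ind_aver_po}). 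I expect the main obstacle to be purely in the bookkeeping of this change of summation: one must track carefully that multiplying the weight by the true assignment probability telescopes the observed propensity score out and leaves precisely the counterfactual weight $\counteralloc$, and one must be explicit about whether the potential outcomes are held fixed or treated as random (in the latter case ignorability is what licenses pulling $E[Y_{ij}(a,\mb s) \mid \mb L_i]$ out of the indicator's expectation). Once the single-term identity is in hand, the two averaging steps are immediate and complete the proof.
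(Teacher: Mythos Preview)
Your proposal is correct and follows essentially the same route as the paper: invoke consistency to write $Y_{ij}=Y_{ij}(\mb A_i)$, expand the expectation as a sum over $\mb a_i\in\mathcal A(n_i)$, use ignorability so that the assignment probability cancels the denominator $\denominator$, and let the indicator collapse the sum to $\mathcal A(n_i-1)$, recovering (\ref{eq:ind_aver_po}); linearity then gives the group and population results. The only cosmetic difference is that the paper takes the expectation over $\mb A_i\mid \mb L_i,\mb Y_i(\cdot)$ (treating potential outcomes as fixed) and works directly at the group level, whereas you condition on $\mb L_i$ and establish the identity per unit before averaging --- but the mechanics are identical.
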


\begin{proof}
	All expectations are taken with respect to the conditional distribution $\mb A_i | \mb L_i, \mb Y_i(\cdot)$, where $\mb Y_i(\cdot)$ are all the potential outcomes for all units in cluster $i$.
	$Y_ij$, $\mb Y_i$ denote the observed individual outcome, and the vector of observed outcomes in cluster $i$ accordingly.
	\small{
	\begin{align*}
	& E[\widehat{Y}_i^L(a; \alpha)] \\
	= & \frac 1{n_i} \sum_{j = 1}^{n_i}
	E \left(
	\frac{f_{\mb A| \mb L, i, \alpha}(\mb A_{i, -j} | A_{ij} = a, \mb L_i, \alpha)}
	{f_{\mb A| \mb L, i}(\mb A_i | \mb L_i)} I(A_{ij} = a) Y_{ij}
	\right) \\
	= & \frac 1{n_i} \sum_{j = 1}^{n_i}
	E \left(
	\frac{f_{\mb A| \mb L, i, \alpha}(\mb A_{i, -j} | A_{ij} = a, \mb L_i, \alpha)}
	{f_{\mb A| \mb L, i}(\mb A_i | \mb L_i)} I(A_{ij} = a) Y_{ij}(\mb A_i)
	\right) \\
	= & \frac 1{n_i} \sum_{j = 1}^{n_i}
	\sum_{\mb s \in \mathcal{A}(n_i)}
	\frac{f_{\mb A| \mb L, i, \alpha}(\mb A_{i, -j} = \mb s_{i, -j} | A_{ij} = a, \mb L_i, \alpha)}
	{f_{\mb A| \mb L, i}(\mb A_i = \mb s | \mb L_i)}
	I(s_{ij} = a) Y_{ij}(\mb s)
	P(\mb A_i = \mb s | \mb L_i, \mb Y_i(\cdot)) \\
	= & \frac 1{n_i} \sum_{j = 1}^{n_i}
	\sum_{\mb s \in \mathcal{A}(n_i)}
	f_{\mb A| \mb L, i, \alpha}(\mb A_{i, -j} = \mb s_{i, -j} | A_{ij} = a, \mb L_i, \alpha)
	I(s_{ij} = a) Y_{ij}(\mb s)
	\tag{From Assumption \ref{ass:group_ignorability}
		$ P(\mb A_i = \mb s | \mb L_i, \mb{Y}_i(\cdot)) = P(\mb A_i = \mb s | \mb L_i) = f_{\mb A| \mb L, i}(\mb A_i = \mb s | \mb L_i)$.} \\
	= &
	\frac 1{n_i} \sum_{j = 1}^{n_i}
	\sum_{\mb s \in \mathcal{A}(n_i - 1)}
	f_{\mb A| \mb L, i, \alpha}(\mb A_{i, -j} = 	\mb s | A_{ij} = a, \mb L_i, \alpha)
	Y_{ij}(a_{ij} = a, a_{i, -j} = \mb s)
	= \overline{Y}_i^L (a; \alpha).
	\end{align*}}
	By linearity of expectations, the proof for the population average potential outcome is trivial.
\end{proof}


\subsection{Proofs of asymptotic results for known propensity score}
\label{app_subsec:ps_known}

For notational simplicity, denote $\tilde{\mb O}_i = (\mb A_i, \mb L_i)$, $\mb O_i = (\mb Y_i, \mb A_i, \mb L_i)$, $\tilde{\mb o}_i = (\mb a_i, \mb l_i)$, and $\mb o_i = (\mb y_i, \mb a_i, \mb l_i)$. Also, denote as $F_0$ the distribution of $(\mb{Y}_i(\cdot), \mb A_i, \mb L_i)$ in the superpopulation.

Consider the estimating equation $\Psi_N(\mu) = \sum_{i =1}^N \psi_{a, \alpha}(\mb O_i; \mu) = 0$, where
$$
\psi_{a, \alpha}(\mb O_i; \mu) =
\left( \frac1{n_i} \sum_{j = 1}^{n_i}
\frac{f_{\mb A| \mb L, i, \alpha}(\mb A_{i, -j} | A_{ij} = a, \mb L_i, \alpha)}
{f_{\mb A| \mb L, i}(\mb A_i | \mb L_i)} I(A_{ij} = a) Y_{ij}\right) - \mu.
$$
It is easy to see that the solution to this equation is $\widehat\mu = \widehat{Y}^L_N(a; \alpha)$:
\begin{align*}
& \sum_{i = 1}^N \left[ \left( \frac{1}{n_i} \sum_{j = 1}^{n_i}
\frac{f_{\mb A| \mb L, i, \alpha}(\mb A_{i, -j} |
	A_{ij} = a, \mb L_i, \alpha)}
{f_{\mb A| \mb L, i}(\mb A_i | \mb L_i)}
I(A_{ij} = a) Y_{ij} \right) - \mu \right] = 0 \iff \\
& \sum_{i = 1}^N \widehat{Y}_i^L(a; \alpha) = N \mu \iff 
\widehat \mu = \frac1N \sum_{i = 1}^N \widehat{Y}_i^L(a; \alpha)
= \widehat{Y}_N^L(a; \alpha)
\end{align*}
If $\mu_0 = \mu_0(a, \alpha)$ is the solution to    
$\Psi_0(\mu) = \int \psi_{a, \alpha}(\mb O_i; \mu) \mathrm{d}F_0(\mb o_i) = 0$. Then,
\begin{align*}
& \int \left[ \frac1{n_i}\sum_{j = 1}^{n_i} 
\frac{f_{\mb A| \mb L, i, \alpha}(\mb A_{i, -j} |
	A_{ij} = a, \mb L_i, \alpha)}
{f_{\mb A| \mb L, i}(\mb A_i | \mb L_i)}
I(A_{ij} = a) Y_{ij} - \mu_0(a, \alpha) \right] \ 
\mathrm{d}F_0(\mb o_i) = 0 \iff \\
&\mu_0(a, \alpha) = E_{F_0}\left[\frac1{n_i}\sum_{j = 1}^{n_i} 
\frac{f_{\mb A| \mb L, i, \alpha}(\mb A_{i, -j} |
	A_{ij} = a, \mb L_i, \alpha)}
{f_{\mb A| \mb L, i}(\mb A_i | \mb L_i)}
I(A_{ij} = a) Y_{ij} \right]
= E_{F_0}\left[\overline{Y}_i^L(a; \alpha)\right]
\end{align*}
%


\renewcommand*{\proofname}{\textbf{Proof of Theorem \ref{theorem:asymptotic_normality}}}
\begin{proof}
	First, we will show that $\widehat Y_N^L(\alpha)$ is consistent for $\boldsymbol\mu_0(\alpha)$. For this proof, we use an alteration of Lemma A in section 7.2.1 of \cite{Serfling1980}. \\ [5pt]
	Note that $\psi_{a, \alpha}(\mb O_i; \mu)$ is monotone in $\mu$ with
	$ \overset{\cdot}{\psi}_{a, \alpha}(\mb O_i;\mu)
	= \frac{\partial}{\partial \mu} {\psi}_{a, \alpha}(\mb O_i;\mu)
	= - 1 < 0 $.
	Therefore, $\Psi_N(\mu), \Psi_0(\mu)$ are also monotone in $\mu$ (implying uniqueness of their roots). From the strong law of large numbers we have that $\Psi_N(\mu) \overset{a.s.}{\rightarrow} \Psi_0(\mu)$. From this, we have that:
	$$
	|\Psi_0(\widehat\mu) - \Psi_0(\mu_0)| =
	|\Psi_0(\widehat\mu) - \Psi_n(\widehat\mu)| \leq
	\sup_\mu |\Psi_0(\mu) - \Psi_n(\mu)| \rightarrow 0,
	$$
	which, by the uniqueness of the roots for $\Psi_0, \Psi_N$, implies 
	$\widehat{Y}_N^L(a;\alpha) \overset{a.s.}{\rightarrow} E_{F_0}\left[\overline{Y}_i^L(a;\alpha)\right]$.
	
	From basic probability laws we have that since the individual components converge almost surely to their limit,
	$$\left(\widehat Y_N^L(0, \alpha), \widehat Y_N^L(1;\alpha)\right)^T
	\overset{a.s}{\rightarrow} \vect\mu_0(\alpha) =
	\left( E_{F_0}\left[ \bar Y_i^L(0, \alpha)\right],
	E_{F_0}\left[\bar Y_i^L(1, \alpha)\right]\right)^T,$$
	which also establishes convergence in probability.

	Now we will show that $\widehat Y_N^L(a;\alpha)$ has an asymptotically univariate normal distribution, for $a = 0, 1$, and afterwards we extend this to showing that  $\widehat Y_N^L(\alpha)$ has an asymptotically bivariate normal distribution. \\ [10pt]
	\textbf{Univariate result} \\
	Based on the above, $\widehat Y^L(a, \alpha) \overset{p}{\rightarrow} E_{F_0} \left[\overline Y_i^L(a, \alpha)\right] = \mu_0(a, \alpha)$. Theorem A in section 7.2.2 of \cite{Serfling1980} requires:
	\begin{enumerate}
		\item[(i)] $\mu_0(a, \alpha)$ is an isolated root of $\Psi_0(\mu) =0$ and $\psi_{a, \alpha}(\cdot;\mu)$ is monotone in $\mu$. (Shown above)
		\item[(ii)] $\Psi_0(\mu)$ is differentiable at $\mu_0(a, \alpha)$ with $\Psi_0'(\mu_0(a, \alpha)) \neq 0$.
		\item[(iii)] $\int \psi^2_{a, \alpha}(\mb o_i;\mu) \mathrm{d}F_0(\mb o_i)$ is finite in a neighborhood of $\mu_0(a, \alpha)$.
	\end{enumerate}
	\noindent
	\textit{Proof of (ii)}
	\begin{align*}
	\Psi_0(\mu) = &
	\int \left[ \frac1{n_i} \sum_{j = 1}^{n_i}
	\frac{f_{\mb A| \mb L, i, \alpha}(\mb A_{i, -j} | A_{ij} = a, \mb L_i, \alpha)}
	{f_{\mb A| \mb L, i}(\mb A_i | \mb L_i)}
	I(A_{ij} = a) Y_{ij} - \mu \right]
	\mathrm{d}F_0(\mb o_i) \\
	= &
	\frac1{n_i} \int \sum_{j = 1}^{n_i}
	\frac{f_{\mb A| \mb L, i, \alpha}(\mb A_{i, -j} | A_{ij} = a, \mb L_i, \alpha)}
	{f_{\mb A| \mb L, i}(\mb A_i | \mb L_i)}
	I(A_{ij} = a) Y_{ij} \ 
	\mathrm{d}F_0(\mb o_i)
	- \mu
	\end{align*}
	So $\Psi_0$ is linear in $\mu$ and therefore differentiable everywhere, with $\Psi'_0(\mu) = - 1 \neq 0$.
	
	\noindent
	\textit{Proof of (iii)}\\
	Consider a neighborhood of $\mu_0 = \mu_0(a, \alpha)$ of the form $(\mu_0(a, \alpha) - \epsilon, \mu_0(a, \alpha) + \epsilon)$, for some $\epsilon > 0$. Then,
	\begin{align*}
	& \int \psi^2_{a, \alpha}(\mb O_i;\mu) \mathrm{d}F_0(\mb o_i) \\
	= & \int \left( \frac1{n_i} \sum_{j = 1}^{n_i}
	\frac{f_{\mb A| \mb L, i, \alpha}(\mb A_{i, -j} | A_{ij} = a, \mb L_i, \alpha)}
	{f_{\mb A| \mb L, i}(\mb A_i | \mb L_i)}
	I(A_{ij} = a) Y_{ij} - \mu \right) ^2 \mathrm{d}F_0(\mb o_i) \\
	= & \int \left| \frac1{n_i} \sum_{j = 1}^{n_i}
	\frac{f_{\mb A| \mb L, i, \alpha}(\mb A_{i, -j} | A_{ij} = a, \mb L_i, \alpha)}
	{f_{\mb A| \mb L, i}(\mb A_i | \mb L_i)}
	I(A_{ij} = a) Y_{ij} - \mu \right| ^2 \mathrm{d}F_0(\mb o_i)
	\end{align*}
	I will show that
	$\left|\frac1{n_i} \sum_{j = 1}^{n_i}
	\frac{f_{\mb A| \mb L, i, \alpha}(\mb A_{i, -j} | A_{ij} = a, \mb L_i, \alpha)}
	{f_{\mb A| \mb L, i}(\mb A_i | \mb L_i)}
	I(A_{ij} = a) Y_{ij} - \mu \right|$
	is bounded by a constant $c$ in a neighborhood of $\mu_0$ and therefore the integral is bounded by $c^2$.
	\begin{align*}
	& \left| \frac1{n_i} \sum_{j = 1}^{n_i}
	\frac{f_{\mb A| \mb L, i, \alpha}(\mb A_{i, -j} | A_{ij} = a, \mb L_i, \alpha)}
	{f_{\mb A| \mb L, i}(\mb A_i | \mb L_i)}
	I(A_{ij} = a) Y_{ij} - \mu \right| \\
	& \leq 
	\frac1{n_i} \sum_{j = 1}^{n_i} \frac{f_{\mb A| \mb L, i, \alpha}(\mb A_{i, -j} | A_{ij} = a, \mb L_i, \alpha)}
	{f_{\mb A| \mb L, i}(\mb A_i | \mb L_i)} |Y_{ij}| + |\mu| \\
	& \leq \frac1{n_i} \sum_{j = 1}^{n_i} \frac{|Y_{ij}|}{\denominator} + |\mu| \\
	\Rightarrow & \int \psi^2_{a, \alpha}(\mb O_i;\mu) \mathrm{d}F_0(\mb o_i) \leq
	|\mu| + (n_i)^{-1} \sum_{j = 1}^{n_i}E_{F_0} \left[\frac{|Y_{ij}|}{\denominator} \right] \\
	& \leq
	|\mu| + (n_i)^{-1} \sum_{j = 1}^{n_i} \sqrt{E_{F_0}(Y_{ij}^2) E_{F_0}(\denominator^{-2})} <
	|\mu| + M \rho^{-1} = c
	\numberthis
	\label{eq:proof_bound_c}
	\end{align*}
	We have shown that the conditions of Theorem A (section 7.2.2 of \cite{Serfling1980}) are satisfied, and therefore
	$$
	\sqrt{n} \left( \widehat{Y}^L_N(a; \alpha) - \mu_0(a;\alpha) \right) \overset{d}{\rightarrow} N(0, \sigma^2),
	$$
	where $\sigma^2 =
	E\left[\psi^2_{a, \alpha}\left(\mb O_i; \mu_0(a, \alpha)\right)\right]$, since $\Psi_0'(\mu_0(a, \alpha)) = -1$.
	
	\vspace{20pt} \noindent
	\textbf{Bivariate result} \\
	We will use Theorem 5.41 of \cite{Vaart1998}. The assumptions of this theorem are the so-called ``classical'' conditions, and are stricter than necessary to prove asymptotic normality. However, this theorem is often used in practice, since the conditions are sometimes easy to prove, as they are here.
	
	We denote
	$ \psi(\mb o_i;\vect \mu) =
	(\psi_{0, \alpha}(\mb o_i; \mu^0),
	\psi_{1, \alpha}(\mb o_i; \mu^1))^T
	$, for $\vect\mu = (\mu^0, \mu^1)$,
	and $\Psi_n(\vect \mu)$, $\Psi_0(\vect\mu)$ similarly as above, but for the vector $\psi$.
	
	It was shown that $\vect \mu_0(\alpha)$ satisfies $\Psi_0(\vect{\mu}) = 0$, and that $\widehat{Y}_N^L(\alpha)$ is a consistent estimator of $\vect \mu_0(\alpha)$. In order to apply Theorem 5.41, we show that
	\begin{enumerate}
		\item[(i)] the function $\vect \mu \rightarrow \psi(\mb o_i;\vect \mu)$ is twice continuously differentiable for every vector $\mb o_i$,
		\item[(ii)] $E_{F_0}\| \psi(\mb O_i; \vect \mu_0(\alpha))\|_2^2 < \infty$ (where $\|\cdot\|_2$ is the 2-norm $\|(v_1, v_2, \dots, v_n)\|_2 = (v_1^2 + v_2^2 + \dots + v_n^2)^{1/2}$,
		\item[(iii)]  The matrix $E_{F_0} \left[\overset{\cdot}{\psi}(\mb O_i; \vect \mu_0(\alpha))\right]$ exists and is nonsingular, and
		\item[(iv)]  $\exists$ fixed integrable function $\overset{\cdot\cdot}{\psi}(\mb o_i)$ such that $\overset{\cdot\cdot}{\psi}$ dominates the second order partial derivatives of $\psi$ $\forall \vect \mu$ in a neighborhood of $\vect \mu_0(\alpha)$. 
	\end{enumerate}
	
	\noindent
	\textit{Proof of (i).}
	It has already been shown that $\psi_{a, \alpha}(\mb o_i;\mu)$ is linear in $\mu$ and therefore twice continuously differentiable with respect to $\mu$ for every vector $(\mb o_i)$.
	
	\noindent
	\textit{Proof of (ii).}
	\small{
	\begin{align*}
	& E_{F_0} \| \psi(\mb O_i; \vect \mu_0(\alpha))\|_2^2 \\
	=&  E\left\{ \sum_{a \in \{0, 1\}} \left[ \frac1{n_i} \sum_{j = 1}^{n_i}
	\frac{\numerator}{\denominator} I(A_{ij} = a) Y_{ij} - \mu_0(a, \alpha) \right]^2 \right\} \\
	=& \sum_{a \in \{0, 1\}} E \left[\frac1{n_i} \sum_{j = 1}^{n_i}
	\frac{f_{\mb A| \mb L, i, \alpha}(\mb A_{i, -j} | A_{ij} = a, \mb L_i, \alpha)}
	{f_{\mb A| \mb L, i}(\mb A_i | \mb L_i)} I(A_{ij} = a) Y_{ij} - \mu_0(a, \alpha) \right]^2 \\
	\leq & 2c^2 \tag{because of (\ref{eq:proof_bound_c})}
	\end{align*}}

	\noindent
	\textit{Proof of (iii).} 
	\begin{align*}
	\overset{\cdot}{\psi}(\mb o_i; \vect \mu) = &
	\begin{pmatrix}
	\frac{\partial \psi_{0, \alpha}\left(\mb o_i; \mu^0\right)}
	{\partial \mu^0} &
	\frac{\partial \psi_{0, \alpha}\left(\mb o_i; \mu^0\right)}
	{\partial \mu^1} \\
	\frac{\partial \psi_{1, \alpha}\left(\mb o_i; \mu^1\right)}
	{\partial \mu^0} &
	\frac{\partial \psi_{1, \alpha}\left(\mb o_i; \mu^1\right)}
	{\partial \mu^1}
	\end{pmatrix}
	= \begin{pmatrix} - 1 & 0 \\ 0 & - 1 \end{pmatrix} = - I_2 < \infty \text{ and non-singular},
	\numberthis
	\label{eq:proof_psi_dot}
	\end{align*}
	where the diagonal elements of partial derivatives are calculated in the proof of consistency, and the non-diagonal elements are clearly 0 since the functions do not include the corresponding components of $\vect \mu$.
	
	\noindent
	\textit{Proof of (iv).} Based on equation (\ref{eq:proof_psi_dot}), we have that all second order derivatives are equal to 0, and are therefore dominated by the integrable function $\overset{\cdot \cdot}{\psi}(\mb o_i) = 0$.
	
	\noindent
	From Theorem 5.41 of \cite{Vaart1998}, we have that
	\begin{align*}
	& \sqrt{n} \left( \widehat Y_N^L(\alpha) - \vect\mu_0(\alpha) \right) = - \left(E\left[\overset{\cdot}{\psi}(\mb O_i; \vect\mu_0(\alpha))\right]\right)^{-1} \frac1{\sqrt n} \sum_{i = 1}^N \psi(\mb O_i; \vect\mu_0(\alpha)) + o_P(1) \\
	\Rightarrow &
	\ \sqrt{n} \left( \widehat Y_N^L(\alpha) - \vect\mu_0(\alpha) \right) \overset{d}{\rightarrow}
	N\left(0, A(\vect\mu_0(\alpha))^{-1} V(\vect\mu_0(\alpha)) \left[ A(\vect\mu_0(\alpha))^{-1} \right]^T\right),
	\end{align*}
	where $$A(\vect\mu_0(\alpha)) = E\left[- \overset{\cdot}{\psi}(\mb O_i; \vect\mu_0(\alpha))\right] = I_2$$
	(from \ref{eq:proof_psi_dot}), and $$V(\vect \mu_0(\alpha)) = E\left[\psi(\mb O_i; \vect\mu_0(\alpha))\psi(\mb O_i; \vect\mu_0(\alpha))^T\right].$$
\end{proof}

\begin{example}
	We provide an example of the application of the delta method on the result of Theorem \ref{theorem:asymptotic_normality}. Consider the direct effect defined as $\mu_0^{DE}(\alpha) = \mu_0(1,\alpha) - \mu_0(0, \alpha)$. Then $\widehat \mu^{DE}(\alpha) = \widehat{Y}^L(1; \alpha) - \widehat{Y}^L(0, \alpha)$ is a consistent estimator, and can be written as $g(\widehat{Y}^L(\alpha))$ for $g((x_1, x_2)^T) = x_1 - x_2$.
	From the Delta method, we know that
	\[
	\sqrt{n} \left( \widehat \mu^{DE}(\alpha) - \mu_0^{DE}(\alpha) \right)
	\rightarrow N(0, \sigma^2)
	\]
	for $ \sigma^2 = \nabla g(\vect \mu_0(\alpha))^T V(\vect\mu_0(\alpha)) \nabla g(\vect \mu_0(\alpha))$, where $\nabla g((x_1, x_2)^T) = (\frac{\partial g}{\partial x_1}, \frac{\partial g}{\partial x_2})^T = (1, -1)^T,$ and $V(\vect\mu_0(\alpha))$ is as in Theorem \ref{theorem:asymptotic_normality}.
	\label{app_ex:delta_method}
\end{example}

\subsection{Proofs of asymptotic results for correctly specified propensity score}
\label{app_subsec:ps_est}

\begin{lemma}
	If condition \ref{cond:exp_norm_squared} of Theorem \ref{theorem:asymptotic_normality_estimated_ps} holds, then
	$ E\left[\vect\psi_\gamma(\mb L_i, \mb A_i; \vect\gamma_0)\right] < \infty $
	\label{lemma:exp_score}
\end{lemma}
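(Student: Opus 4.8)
The plan is to read the conclusion $E\left[\vect\psi_\gamma(\mb L_i, \mb A_i; \vect\gamma_0)\right] < \infty$ as the assertion that the (vector-valued) expectation of the score at the truth is well-defined and has finite norm, and to derive it from the $L^2$ bound in condition \ref{cond:exp_norm_squared} using the elementary fact that a finite second moment controls the first. To fix notation I would write $X = \vect\psi_\gamma(\mb L_i, \mb A_i; \vect\gamma_0)$, so that condition \ref{cond:exp_norm_squared} reads $E_{F_0}\|X\|_2^2 < \infty$, and the goal is to bound $\|E_{F_0}[X]\|_2$.

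First I would apply the Lyapunov inequality (equivalently Cauchy–Schwarz against the constant $1$) to the nonnegative scalar $\|X\|_2$, which gives $E_{F_0}\|X\|_2 \le \big(E_{F_0}\|X\|_2^2\big)^{1/2} < \infty$. Since each coordinate satisfies $|X_k| \le \|X\|_2$, this bound already forces $E_{F_0}|X_k| \le E_{F_0}\|X\|_2 < \infty$ for every $k$, so the coordinatewise expectation $E_{F_0}[X]$ exists and is finite. Applying Jensen's inequality to the convex norm $x \mapsto \|x\|_2$ then yields $\|E_{F_0}[X]\|_2 \le E_{F_0}\|X\|_2 \le \big(E_{F_0}\|X\|_2^2\big)^{1/2} < \infty$, which is exactly the claim.

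There is essentially no obstacle here beyond bookkeeping. The only point requiring care is to confirm integrability of each coordinate (so that the vector expectation is even defined) \emph{before} invoking Jensen; both that integrability and the final norm bound fall out of the single chain of inequalities above. The lemma is really a packaging step, ensuring that the expected score entering the sandwich-variance formula of Theorem \ref{theorem:asymptotic_normality_estimated_ps} is a legitimate finite quantity.
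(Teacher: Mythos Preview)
Your proof is correct and essentially identical to the paper's own argument: both reduce to the elementary fact that a finite second moment of the score vector forces each coordinate to be integrable via Jensen/Cauchy--Schwarz. The paper works directly coordinatewise, bounding $E_{F_0}^2(\psi_\gamma^k) \le E[(\psi_\gamma^k)^2] \le \sum_l E[(\psi_\gamma^l)^2] = E_{F_0}\|\vect\psi_\gamma\|_2^2$, whereas you route through $E\|X\|_2 \le (E\|X\|_2^2)^{1/2}$ first and then pass to coordinates, but the content is the same.
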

%
\renewcommand*{\proofname}{\textbf{Proof of Lemma \ref{lemma:exp_score}}}
\begin{proof}
	Denote $\vect\psi_\gamma = (\psi_\gamma^1, \psi_\gamma^2, \dots, \psi_\gamma^p)^T$. Then,
	$$
	E_{F_0}^2\left(\psi_\gamma^k\right) \leq
	E\left[\left(\psi_\gamma^k\right)^2\right] \leq
	\sum_{l = 1}^p E\left[\left(\psi_\gamma^l\right)^2\right] =
	E_{F_0}^2 \left\| \vect\psi_\gamma(\mb L_i, \mb A_i;\vect\gamma) \right\|^2 < \infty \Rightarrow
	E_{F_0}\left(\psi_\gamma^k\right) < \infty
	$$
	where the first inequality uses Jensen's inequality for $g(x) = x^2$. From this, we see that the score functions are integrable with finite expectation.
\end{proof}

\begin{lemma}
	Assuming that the conditions of Theorem \ref{theorem:asymptotic_normality_estimated_ps} hold, the estimator $\widehat{Y}_N^L(\alpha)$ using the estimates of the correctly specified propensity score model is consistent for $\vect\mu_0(\alpha)$.
	\label{lemma:consistency_est_ps}
\end{lemma}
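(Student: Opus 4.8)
The plan is to leverage the known-propensity-score consistency already established in Theorem \ref{theorem:asymptotic_normality} and to control only the extra error created by replacing $\vect\gamma_0$ with the estimate $\widehat{\vect\gamma}$. To expose the dependence on the model parameters, write
$$
\widehat{Y}_N^L(a; \alpha; \vect\gamma) = \frac1N \sum_{i=1}^N \frac{1}{n_i} \sum_{j=1}^{n_i} \frac{\numerator}{f_{\mb A| \mb L, i}(\mb A_i | \mb L_i;\vect\gamma)} I(A_{ij} = a) Y_{ij},
$$
so the estimator of interest is $\widehat{Y}_N^L(a; \alpha; \widehat{\vect\gamma})$ while the known-PS estimator of Theorem \ref{theorem:asymptotic_normality} is $\widehat{Y}_N^L(a; \alpha; \vect\gamma_0)$. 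I would then use the decomposition
$$
\widehat{Y}_N^L(a; \alpha; \widehat{\vect\gamma}) - \mu_0(a, \alpha) = \Big[ \widehat{Y}_N^L(a; \alpha; \widehat{\vect\gamma}) - \widehat{Y}_N^L(a; \alpha; \vect\gamma_0) \Big] + \Big[ \widehat{Y}_N^L(a; \alpha; \vect\gamma_0) - \mu_0(a, \alpha) \Big],
$$
show each bracket is $o_P(1)$ for $a = 0, 1$, and stack the two coordinates to obtain consistency of $\widehat Y_N^L(\alpha)$ for $\vect\mu_0(\alpha)$.

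The second bracket vanishes in probability directly by Theorem \ref{theorem:asymptotic_normality}, since setting $\vect\gamma = \vect\gamma_0$ recovers exactly the known-propensity-score estimator already shown consistent for $\mu_0(a,\alpha)$. For the first bracket I would expand in $\vect\gamma$ by the mean value theorem: each summand is differentiable in $\vect\gamma$ (the parametric model is differentiable by hypothesis, and the denominator is bounded away from $0$ on a neighborhood of $\vect\gamma_0$ by continuity together with super-population positivity, Assumption \ref{ass:super_positivity}), so
$$
\widehat{Y}_N^L(a; \alpha; \widehat{\vect\gamma}) - \widehat{Y}_N^L(a; \alpha; \vect\gamma_0) = \left[ \frac{\partial}{\partial \vect\gamma^T} \widehat{Y}_N^L(a; \alpha; \widetilde{\vect\gamma}) \right] (\widehat{\vect\gamma} - \vect\gamma_0)
$$
for some $\widetilde{\vect\gamma}$ on the segment joining $\widehat{\vect\gamma}$ and $\vect\gamma_0$. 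Using $\partial f^{-1}/\partial \vect\gamma = - f^{-1}\vect\psi_\gamma$, the gradient is an $(i,j)$-average of terms $f_{\mb A| \mb L, i}(\mb A_i | \mb L_i;\widetilde{\vect\gamma})^{-1}\, \vect\psi_\gamma(\mb L_i, \mb A_i; \widetilde{\vect\gamma})\, \numerator\, I(A_{ij}=a) Y_{ij}$, which I would bound using $f^{-1} \le \rho^{-1}$, $\numerator \le 1$, $|Y_{ij}| < M$, and the finite score moment (condition \ref{cond:exp_norm_squared}). Since $\widehat{\vect\gamma} - \vect\gamma_0 = o_P(1)$ by the assumed consistency of $\widehat{\vect\gamma}$, the first bracket is a product of an $O_P(1)$ factor and an $o_P(1)$ factor, hence $o_P(1)$.

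The main obstacle is that the gradient is evaluated at the random, data-dependent point $\widetilde{\vect\gamma}$ rather than at $\vect\gamma_0$, so a pointwise law of large numbers at $\vect\gamma_0$ does not suffice: I need the gradient bounded in probability uniformly over a shrinking neighborhood of $\vect\gamma_0$. To secure this I would build an integrable envelope for the gradient on such a neighborhood, combining positivity (which, by continuity of $f$ in $\vect\gamma$, keeps $f^{-1}$ uniformly bounded near $\vect\gamma_0$), boundedness of the outcome and of $\numerator$, and an integrable bound for $\vect\psi_\gamma$ obtained from its finite second moment at $\vect\gamma_0$ (condition \ref{cond:exp_norm_squared}) together with the twice-continuous differentiability and the integrable domination of its second derivatives assumed in Theorem \ref{theorem:asymptotic_normality_estimated_ps} (condition \ref{cond:dominate}). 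A uniform law of large numbers then forces the gradient at $\widetilde{\vect\gamma}$ to converge to the finite expected gradient at $\vect\gamma_0$, making it $O_P(1)$ since $\widetilde{\vect\gamma} \overset{p}{\to} \vect\gamma_0$. Combining the two brackets yields $\widehat{Y}_N^L(a;\alpha;\widehat{\vect\gamma}) \overset{p}{\to} \mu_0(a,\alpha)$ for $a = 0, 1$, and joint convergence of the two-vector to $\vect\mu_0(\alpha)$ follows immediately.
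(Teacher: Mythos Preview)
Your proposal is correct and more explicit than the paper's own proof, but it takes a somewhat different route. The paper sets up the \emph{augmented} estimating equations
\[
\vect\psi(\mb O_i;\vect\theta)=\big(\vect\psi_\gamma(\tilde{\mb O}_i;\vect\gamma)^T,\ \psi_{0,\alpha}(\mb O_i;\mu^0,\vect\gamma),\ \psi_{1,\alpha}(\mb O_i;\mu^1,\vect\gamma)\big)^T,
\]
so that the plug-in estimator is the last two coordinates of the joint root $\widehat{\vect\theta}$; consistency is then dispatched in one sentence by observing that $f(\mb a_i|\mb l_i;\vect\gamma)$ is continuous in $\vect\gamma$, that $\widehat{\vect\gamma}\overset{p}{\to}\vect\gamma_0$, and that the known-PS estimator is already consistent. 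Your approach instead isolates the perturbation $\widehat{Y}_N^L(a;\alpha;\widehat{\vect\gamma})-\widehat{Y}_N^L(a;\alpha;\vect\gamma_0)$, linearizes it by the mean value theorem, and controls the resulting gradient uniformly via an integrable envelope built from positivity, bounded outcomes, and conditions \ref{cond:exp_norm_squared}--\ref{cond:dominate}. The paper's route has the advantage that the augmented $\vect\psi$ is exactly the object used immediately afterward to prove asymptotic normality (Theorem \ref{theorem:asymptotic_normality_estimated_ps}), so consistency comes essentially for free once that machinery is in place. Your route makes the key technical content---the need for uniform control of the gradient over a shrinking neighborhood of $\vect\gamma_0$---fully explicit, which the paper's one-line continuity argument leaves implicit. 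One small caveat: your claim that positivity extends from $\vect\gamma_0$ to a neighborhood (``by continuity'') needs a uniform-in-$(\mb a_i,\mb l_i)$ lower bound on $f(\cdot\,;\vect\gamma)$, not just pointwise continuity; this is a standard regularity that both you and the paper tacitly assume.
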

\renewcommand*{\proofname}{\textbf{Proof of Lemma \ref{lemma:consistency_est_ps}}}
\begin{proof}
	Consider the augmented estimated equations defined as $\Psi_n(\vect\theta) = \sum_{i = 1}^N \vect \psi(\mb Y_i, \mb A_i, \mb L_i;\vect\theta)$, where
	\[
	\vect \psi(\mb Y_i, \mb A_i, \mb L_i;\vect\theta) =
	\begin{pmatrix}
	\vect \psi_\gamma(\mb L_i, \mb A_i; \vect\gamma) \\
	\psi_{0, \alpha}(\mb Y_i, \mb A_i, \mb L_i; \mu^0, \vect\gamma) \\
	\psi_{1, \alpha}(\mb Y_i, \mb A_i, \mb L_i; \mu^1, \vect\gamma) 
	\end{pmatrix}_{(p + 2)\times1}
	\]
	where $\vect\theta = (\vect \gamma^T, \mu^0, \mu^1)^T$ and $p$ is the number of parameters of the parametric propensity score model. Note that $\psi_{a, \alpha}$ is now a function of $\vect\gamma$ since it uses the estimated propensity score. Denote the vector that solves $\Psi_n(\vect\theta) = 0$ as $\widehat{\vect\theta}$. The first $p$ elements of $\widehat{\vect\theta}$ correspond to estimators of the propensity score model parameters, which are consistent for $\vect\gamma_0$.
	Since $\widehat Y_N^L(\alpha)$ based on the true propensity score is consistent, $f(\mb a_i|\mb l_i;\vect\gamma)$ is differentiable in $\vect\gamma$ and therefore continuous, and $\vect\gamma \overset{p}{\rightarrow} \vect\gamma_0$, the last two elements of $\widehat{\vect\theta}$ which correspond to $\widehat{Y}^L(0, \alpha), \widehat{Y}^L(1, \alpha)$ using the estimated propensity score are consistent estimators of $\overline{Y}^L(0, \alpha), \overline{Y}^L(0, \alpha)$.
\end{proof}
%
\renewcommand*{\proofname}{\textbf{Proof of Theorem \ref{theorem:asymptotic_normality_estimated_ps}}}
\begin{proof}
	\noindent
	We will again use Theorem 5.41 of \cite{Vaart1998}.
	Since consistency has been established in Lemma \ref{lemma:consistency_est_ps}, showing the four conditions stated in the proof of Theorem \ref{theorem:asymptotic_normality} for the augmented $\vect\psi$ will establish asymptotic normality. Denote $\vect \theta_0 = (\vect \gamma_0^T, \vect\mu_0(\alpha)^T)^T$.
	
	\noindent
	\textit{Proof of (i).}
	By the conditions of the theorem, $\vect \gamma \rightarrow \vect\psi_\gamma(\mb l_i, \mb a_i;\vect\gamma)$ is twice continuously differentiable. This implies that
	$\psi_{a, \alpha}(\mb y_i, \mb l_i, \mb a_i; \mu_0(a, \alpha), \vect\gamma)$, $a = 0, 1$ are three times continuously differentiable with respect to $\vect \gamma$. Therefore, the second order partial derivatives with respect to $\vect\gamma$ exist and are continuous.
	Moreover, since $\vect\psi_\gamma(\mb l_i, \mb a_i;\vect\gamma)$ is not a function of $\mu^a$, and using (\ref{eq:proof_psi_dot}), the second partial derivatives with respect to elements of $\vect\mu = (\mu^0, \mu^1)$ exist and are continuous.
	Lastly, all second order derivatives with respect to an element of $\vect\mu$ and an element of $\vect\gamma$ exist and are 0, and therefore continuous.
	This shows that $\vect\theta \rightarrow \vect\psi(\mb y_i, \mb l_i, \mb a_i;\vect\theta)$ is twice continuously differentiable.
	
	\noindent
	\textit{Proof of (ii).} We want to show that $E_{\mb{Y}_i, \mb L_i, \mb A_i}\| \psi(\mb{Y}_i, \mb L_i, \mb A_i; \vect \theta_0) \|_2^2 < \infty$. But
	\begin{align*}
	E_{\mb{Y}_i, \mb L_i, \mb A_i} &\| \psi(\mb{Y}_i, \mb L_i, \mb A_i; \vect \theta_0)\|_2^2 = \\
	& E_{\mb L_i, \mb A_i} \| \vect\psi_\gamma(\mb L_i, \mb A_i; \vect\gamma_0) \|^2_2 + \sum_{a \in \{0, 1\}} E_{\mb{Y}_i, \mb L_i, \mb A_i} \| \psi_{a, \alpha}(\mb Y_i, \mb A_i, \mb L_i;\mu_0(a, \alpha)) \|^2,
	\end{align*}
	where the first term is finite from the assumptions on the propensity score model, and the terms in the summation are finite from (\ref{eq:proof_bound_c}).
	
	\noindent
	\textit{Proof of (iii).}
	We want to show that the matrix $E_{\mb Y_i, \mb L_i, \mb A_i}\left[\overset{\cdot}{\psi}(\mb y_i, \mb l_i, \mb a_i; \vect \theta_0)\right]$ exists and is non singular. We have
	\[
	\overset{\cdot}{\psi}(\mb y_i, \mb l_i, \mb a_i; \vect \theta) =
	\begin{pmatrix}
	\frac{\partial}{\partial \vect\gamma^T}\vect\psi_\gamma(\mb L_i, \mb A_i; \vect\gamma)_{p \times p} & 0_{p\times 1} & 0_{p \times 1} \\
	\frac{\partial}{\partial \vect\gamma^T}\vect\psi_{0, \alpha}(\mb Y_i, \mb L_i, \mb A_i; \mu^0, \vect\gamma)_{1 \times p} & - 1 & 0 \\
	\frac{\partial}{\partial \vect\gamma^T}\vect\psi_{1, \alpha}(\mb Y_i, \mb L_i, \mb A_i; \mu^1, \vect\gamma)_{1 \times p} & 0 & - 1
	\end{pmatrix},
	\]
	where the the 0's in the top row are because $\vect\psi_\gamma$ is not a function of $\mu^0, \mu^1$. We have assumed that $E\left[\frac{\partial}{\partial \vect\gamma^T}\vect\psi_\gamma(\mb L_i, \mb A_i; \vect\gamma_0)\right]$ exists and we will show that $$E\left[\frac\partial{\partial \gamma^T}\psi_{a, \alpha}(\mb Y_i, \mb L_i, \mb A_i;\mu_0(a, \alpha), \vect\gamma_0)\right]$$ exists for $a = 0, 1$. \\
	
	\noindent
	\textit{Showing that $E\left[\frac\partial{\partial \gamma^T}\psi_{a, \alpha}(\mb Y_i, \mb L_i, \mb A_i;\mu_0(a, \alpha))\right] < \infty$ for $a = 0, 1$}.
	
	\noindent Note that even if the estimates of $\vect\gamma$ were used to define the counterfactual treatment allocation $\numerator$, it is considered fixed as a function of $\vect\gamma$, since it is used to represent a fixed realistic treatment allocation program.
	\begin{align*}
	\frac{\partial}{\partial\gamma_k} &
	\psi_{a, \alpha}(\mb O_i;\mu^a, \vect\gamma) = 
	\left(\frac1{n_i}\sum_{j = 1}^{n_i} \numerator I(A_{ij} = a) Y_{ij} \right)
	\left(\frac{\partial}{\partial\gamma_k} \frac1\denominator \right) \\
	=& - \left(\frac1{n_i}\sum_{j = 1}^{n_i} \numerator I(A_{ij} = a) Y_{ij} \right)
	\left( \frac{\frac{\partial}{\partial\gamma_k} \log \denominator}{\denominator} \right) \\
	= & - \psi_\gamma^k\left(\tilde{\mb O}_i; \vect\gamma \right) \left(\frac1{n_i}\sum_{j = 1}^{n_i} \frac\numerator\denominator I(A_{ij} = a) Y_{ij} \right),
	\numberthis \label{eq:psi_a_wrt_gamma}
	\end{align*}
	where $\psi_\gamma^k\left(\tilde{\mb O}_i; \vect\gamma \right)$ is the $k^{th}$ component of $\vect\psi_\gamma\left(\tilde{\mb O}_i; \vect\gamma \right)$ for which $E_{F_0}\left[ \psi_\gamma^k\left(\tilde{\mb O}_i; \vect\gamma_0 \right) \right] < \infty$ (Lemma \ref{lemma:exp_score}). Also,
	$
	\left|\frac\numerator\denominator I(A_{ij} = a) Y_{ij} \right| < M/\delta_o
	$
	using the conditions of Theorem \ref{theorem:asymptotic_normality}. So, we have shown that $E\left[\frac\partial{\partial \gamma^T}\psi_{a, \alpha}(\mb Y_i, \mb L_i, \mb A_i;\mu_0(a, \alpha))\right] < \infty$.
	
	From this, we conclude that $E_{F_0}\left[ \overset{\cdot}{\psi}(\mb y_i, \mb l_i, \mb a_i; \vect \theta) \right]$ exists. 
	Furthermore, from the theorem assumptions we have that $E\left[\frac{\partial}{\partial \vect\gamma^T}\vect\psi_\gamma(\mb L_i, \mb A_i; \vect\gamma_0)\right]$ is non-singular and the rows of $\partial\vect\psi_\gamma / \partial \gamma^T$ are linearly independent. The bottom two rows are linearly independent to the rest since they are the only ones to include non-zero elements in the last two columns. From this, we conclude that the rows of $E\left[\overset{\cdot}{\psi}(\mb Y_i, \mb L_i, \mb A_i; \vect \theta_0)\right]$ are linearly independent, and the matrix is full rank and non-singular.

	\noindent
	\textit{Proof of (iv).}
	We need to show that $\exists$ integrable function $\alpha(\mb o_i)$ fixed, such that $\alpha(\mb o_i)$ dominates all the second order partial derivatives of $\vect\psi(\mb o_i;\vect\theta)$. 
	Therefore, we need to show that for $k, l \in \{1, 2, \dots, p\}$, $a \in \{0, 1\}$:
	\begin{enumerate}
		\item \( \left| \displaystyle \frac{\partial^2
			\vect\psi_\gamma\left(\tilde{\mb o}_i ;\vect\gamma\right)}
		{\partial \gamma_k \partial \gamma_l} \right|
		\leq \alpha_{kl}(\mb o_i ) \), \label{wtp1}
		\item \( \left| \displaystyle \frac{\partial^2 \vect\psi_\gamma\left(\tilde{\mb o}_i ;\vect\gamma\right)} {\partial \gamma_k \partial \mu^a} \right| \leq \alpha_k^a(\mb o_i) \), \label{wtp2}
		\item \( \left| \displaystyle \frac{\partial^2 \vect\psi_\gamma\left(\tilde{\mb o}_i ;\vect\gamma\right)} {\partial \mu^{a_1} \partial \mu^{a_2}} \right| \leq \alpha^{a_1a_2}(\mb o_i) \), \label{wtp3}
		\item \( \left| \displaystyle \frac{\partial^2 \psi_{a, \alpha}(\mb o_i;\mu^a, \vect\gamma)}{\partial \mu^{a_1} \partial \mu^{a_2}} \right| \leq \xi^{a_1a_2}(\mb o_i) \), \label{wtp4}
		\item \( \left| \displaystyle \frac{\partial^2 \psi_{a, \alpha}(\mb o_i;\mu^a, \vect\gamma)}{\partial \mu^{a_1} \partial \gamma_k} \right| \leq \xi^{a_1}_k(\mb o_i) \), \label{wtp5}
		\item \( \left| \displaystyle \frac{\partial^2 \psi_{a, \alpha}(\mb o_i;\mu^a, \vect\gamma)}{\partial \gamma_k \partial \gamma_l} \right| \leq \xi_{kl}(\mb o_i) \), \label{wtp6}
	\end{enumerate}
	for $\vect\theta$ in a neighborhood of $\vect\theta_0$,
	where
	$\alpha_{kl}({\mb o}_i), \alpha_k^a({\mb o}_i),
	\alpha^{a_1a_2}({\mb o}_i), \xi^{a_1a_2}({\mb o}_i),
	\xi^a_k({\mb o}_i), \xi_{kl}({\mb o}_i)$
	are $F_0$-integrable. If we show the above, by setting $\alpha(\mb o_i) = \max_{k,l,a}\{\alpha_{kl}({\mb o}_i),
	\alpha_k^a({\mb o}_i), \alpha^{a_1a_2}({\mb o}_i),$ $\xi^{a_1a_2}({\mb o}_i), \xi^a_k({\mb o}_i),
	\xi_{kl}({\mb o}_i) \}$ we have that all second order partial derivatives are dominated by the $F_0$ integrable $\alpha(\mb o_i)$.
	
	Since $\vect\psi_\gamma\left(\tilde{\mb o}_i;\vect\gamma\right)$ is not a function of $\mu^a$, conditions \ref{wtp2}, \ref{wtp3} are easy to satisfy by setting $\alpha_k^a({\mb o}_i) = \alpha^{a_1a_2}({\mb o}_i) = 0$.
	The same is true for conditions \ref{wtp4}, \ref{wtp5}, since $\partial \psi_{a, \alpha}(\mb o_i;\mu_a,\vect\gamma)/\partial\mu^{a_1} = - I(a = a_1)$ and therefore all second order derivatives that include at least one derivative with respect to $\mu^{a_1}$ will be equal to 0. So we can set $\xi^{a_1a_2}({\mb o}_i) = \xi^a_k({\mb o}_i) = 0$.
	
	From the assumptions of the theorem, we know that $\exists$ $\overset{\cdot \cdot}{\psi}_\gamma(\mb l_i, \mb a_i)$ integrable such that
	\( \displaystyle \left| \frac{\partial^2 \vect\psi_\gamma(\mb l_i, \mb a_i;\gamma)}{\partial \gamma_k \partial \gamma_l} \right| \leq \overset{\cdot \cdot}{\psi}_\gamma(\mb l_i, \mb a_i) \), for all $\vect \gamma$ in a neighborhood of $\vect\gamma_0$. Then, $\alpha_{kl}({\mb o}_i) = \overset{\cdot \cdot}{\psi}_\gamma(\mb o_i)$ satisfy condition \ref{wtp1}. 
	Since $\gamma_0$ is in an open subset of the Euclidean space, there exists $\epsilon > 0$ such that the second partial derivatives of $\vect\psi_\gamma$ are dominated by $\overset{\cdot\cdot}{\psi}$ for all $\vect\gamma \in \mathcal{N}^\epsilon(\vect\gamma_0) = \{\vect\gamma: \|\vect\gamma - \vect\gamma_0\| < \epsilon\}$, subset of the parameter space. Let $\overline{\mathcal N}^{\epsilon / 2}(\vect\gamma_0) = \{ \vect\gamma: \|\vect\gamma - \vect\gamma_0\| \leq \epsilon/2 \} \subset \mathcal{N}^\epsilon(\vect\gamma_0)$. Then, $\overline{\mathcal N}^{\epsilon / 2}(\vect\gamma_0)$ is a compact subset of the Euclidean space.
	
	We will show that for $\vect\gamma \in \overline{\mathcal N}^{\epsilon / 2}(\vect\gamma_0)$
	the second order partial derivatives in \ref{wtp6} are bounded by an integrable function. First, let's acquire their form:
	\small{
	\begin{align*}
	& \frac{\partial^2 \psi_{a, \alpha}(\mb O_i;\mu^a,\vect\gamma)}{\partial\gamma_k\partial\gamma_l} \\
	= &
	\frac{\partial}{\partial\gamma_k} \left[
	- \psi_\gamma^l\left(\tilde{\mb O}_i; \vect\gamma \right) \left(\frac1{n_i}\sum_{j = 1}^{n_i} \frac\numerator\denominator I(A_{ij} = a) Y_{ij} \right)\right] \\
	= & - \frac\partial{\partial\gamma_k} \psi_\gamma^l\left(\tilde{\mb O}_i; \vect\gamma \right) \left(\frac1{n_i}\sum_{j = 1}^{n_i} \frac\numerator\denominator I(A_{ij} = a) Y_{ij} \right) \\
	& - \psi_\gamma^l\left(\tilde{\mb O}_i; \vect\gamma \right)
	\frac{\partial}{\partial\gamma_k} \psi_{a, \alpha}(\mb O_i;\mu^a, \vect\gamma) \\
	= & - \frac\partial{\partial\gamma_k} \psi_\gamma^l\left(\tilde{\mb O}_i; \vect\gamma \right) \left(\frac1{n_i}\sum_{j = 1}^{n_i} \frac\numerator\denominator I(A_{ij} = a) Y_{ij} \right) \\
	& + \psi_\gamma^l\left(\tilde{\mb O}_i; \vect\gamma \right)
	\psi_\gamma^k\left(\tilde{\mb O}_i; \vect\gamma \right) \left(\frac1{n_i}\sum_{j = 1}^{n_i} \frac\numerator\denominator I(A_{ij} = a) Y_{ij} \right) \\
	= & \left[ \psi_\gamma^l\left(\tilde{\mb O}_i; \vect\gamma \right)
	\psi_\gamma^k\left(\tilde{\mb O}_i; \vect\gamma \right) - \frac\partial{\partial\gamma_k} \psi_\gamma^l\left(\tilde{\mb O}_i; \vect\gamma \right)  \right]
	\left[\frac1{n_i}\sum_{j = 1}^{n_i} \frac\numerator\denominator I(A_{ij} = a) Y_{ij} \right]
	\end{align*}}
	where the first and third equation use (\ref{eq:psi_a_wrt_gamma}), and the second equation is an application of the chain rule. Then
	\begin{align}
	\left| \frac{\partial^2 \psi_{a, \alpha}(\mb O_i;\mu^a,\vect\gamma)}{\partial\gamma_k\partial\gamma_l} \right| <
	\frac M{\delta_o} \left| \psi_\gamma^l\left(\tilde{\mb O}_i; \vect\gamma \right)
	\psi_\gamma^k\left(\tilde{\mb O}_i; \vect\gamma \right) - \frac\partial{\partial\gamma_k} \psi_\gamma^l\left(\tilde{\mb O}_i; \vect\gamma \right) \right|
	\label{eq:psi_a_gamma_gamma}
	\end{align}
	For all $k, l\in \{1, 2, \dots, p\}$, 
	$\psi_\gamma^l\left(\tilde{\mb O}_i; \vect\gamma \right)$, $\frac\partial{\partial\gamma_k} \psi_\gamma^l\left(\tilde{\mb O}_i; \vect\gamma \right)$ are differentiable and therefore continuous in $\vect\gamma$, implying that the function on the right-hand side of (\ref{eq:psi_a_gamma_gamma}) is continuous in $\vect\gamma$.
	
	Define
	\(\displaystyle
	g(\vect\gamma) =
	E_{F_0} \left| \psi_\gamma^l\left(\tilde{\mb O}_i; \vect\gamma \right)
	\psi_\gamma^k\left(\tilde{\mb O}_i; \vect\gamma \right) - \frac\partial{\partial\gamma_k} \psi_\gamma^l\left(\tilde{\mb O}_i; \vect\gamma \right) \right|
	\).
	Then $g(\vect\gamma)$ is continuous in $\vect\gamma$. But since $\overline{\mathcal{N}}^{\epsilon/2}(\vect\gamma_0)$ is a compact set, $g(\vect\gamma)$ is bounded in $\overline{\mathcal{N}}^{\epsilon/2}(\vect\gamma_0)$, and in fact achieves a maximum. Let 
	$$
	\xi_{kl}(\mb o_i) = \xi_{kl} = \frac M{\delta_o}
	\max\left\{g(\vect\gamma), \vect\gamma \in \overline{\mathcal{N}}^{\epsilon/2}(\vect\gamma_0) \right\}.
	$$
	Then \(\displaystyle \left| \frac{\partial^2 \psi_{a, \alpha}(\mb O_i;\mu^a,\vect\gamma)}{\partial\gamma_k\partial\gamma_l} \right| < \xi_{kl}(\mb o_i), \ \forall \vect\gamma \in \overline{\mathcal L}^{\epsilon / 2}(\vect\gamma_0) \), and $\xi_{kl}$ is integrable since it is a constant function.
	
	Then, set $\alpha(\mb o_i) = \max\{\alpha_{kl}({\mb o}_i),
	\alpha_k^a({\mb o}_i), \alpha^{a_1a_2}({\mb o}_i), \xi^{a_1a_2}({\mb o}_i),
	\xi^a_k({\mb o}_i), \xi_{kl}({\mb o}_i) \}$, and
	all second order partial derivatives are dominated by the $F_0$-integrable $\alpha(\mb o_i)$, for all $\vect\theta \in \mathcal{N}^{\epsilon / 2}(\vect\theta_0) = \left\{\vect\theta: \|\theta - \theta_0\| < \epsilon/2 \right\}$.
	
	From Theorem 5.41 of \cite{Vaart1998}, we have that $$\sqrt{n}\left(\widehat\theta - \theta_0 \right) \overset{N\rightarrow\infty}{\rightarrow} N(0, Q(\theta_0)),$$ where 
	$$
	Q(\theta_0) = A(\theta_0)^{-1}B(\theta)[A(\theta_0)^{-1}]^T$$
	for
	$A(\theta_0) = E\left[\overset{\cdot}{\psi}(\mb O_i; \vect \theta_0)\right]$, and
	$B(\theta_0) = E\left[{\psi}(\mb O_i; \vect \theta_0){\psi}(\mb O_i; \vect \theta_0)^T \right]$.
	
	However, we are only interested in the bottom-right $2\times2$ submatrix of $Q(\theta_0)$ which corresponds to the asymptotic variance of $(\widehat\mu_0, \widehat\mu_1)^T$ when the propensity score model is estimated. Note that $A(\theta)$, $B(\theta)$ can be rewritten as
	\begin{align*}
	A(\theta) = \begin{bmatrix} A_{11} & 0 \\ A_{21} & - I_2 \end{bmatrix} & \text{ where }
	A_{11} = E\left[\frac{\partial \psi_\gamma}{\partial \gamma^T} \right]_{p\times p}
	A_{21} = E\left[\frac{\partial (\psi_0, \psi_1)^T}{\partial \gamma^T} \right]_{2\times p}, \\ 
	B(\theta) = \begin{bmatrix} B_{11} & B_{12} \\ B_{21} & B_{22} \end{bmatrix} & \text{ for }
	B_{11} = E\left[ \psi_\gamma \psi_\gamma^T\right]_{p\times p}
	B_{12} = E[\psi_\gamma \psi_0, \psi_\gamma\psi_1]_{p\times 2},\ 
	B_{21} = B_{12}^T, \text{ and } \\
	B_{22} = & E\begin{bmatrix} \psi_0^2 & \psi_0\psi_1 \\ \psi_1\psi_0 & \psi_1^2 \end{bmatrix},
	\end{align*}
	where the arguments $(\mb Y_i, \mb L_i, \mb A_i)$ have been suppressed. Then,
	\begin{align*}
	& A(\theta)^{-1} B(\theta) [A(\theta)^{-1}]^T = 
	\begin{bmatrix} A_{11}^{-1} & \mb 0 \\ A_{21}A_{11}^{-1} & -I_2 \end{bmatrix}
	\begin{bmatrix} B_{11} & B_{12} \\ B_{21} & B_{22} \end{bmatrix}
	\begin{bmatrix} A_{11}^{-1} & \mb 0 \\ A_{21}A_{11}^{-1} & -I_2 \end{bmatrix}^T \\
	=& \begin{bmatrix} A_{11}^{-1}B_{11} & A_{11}^{-1}B_{12} \\ 
	A_{21}A_{11}^{-1}B_{11} - B_{21} & A_{21}A_{11}^{-1}B_{12} - B_{22} \end{bmatrix} \begin{bmatrix} (A_{11}^{-1})^T & (A_{21}A_{11}^{-1})^T \\ \mb 0 & - I_2 \end{bmatrix} \\
	=& \begin{bmatrix} - I_p & -B_{11}^{-1}B_{12} \\ 
	- A_{21} - B_{21} & - A_{21}B_{11}^{-1}B_{12} - B_{22} \end{bmatrix}
	\begin{bmatrix} -(B_{11}^{-1})^T & - (A_{21}B_{11}^{-1})^T \\ \mb 0 & - I_2 \end{bmatrix} \tag{Since $A_{11} = - B_{11}$.} \\
	=& \begin{bmatrix} \dots & \dots \\ \dots & (A_{21} + B_{21})B_{11}^{-1}A_{21}^T +  A_{21}B_{11}^{-1} B_{12} + B_{22} \end{bmatrix}
	\tag{$B_{11}$ symmetric $\Rightarrow B_{11}^{-1}$ symmetric} \\ 
	=& \begin{bmatrix} \dots & \dots \\ \dots & A_{21}B_{11}^{-1}A_{21}^T + A_{21}B_{11}^{-1}B_{12} + \left(A_{21}B_{11}^{-1}B_{12}\right)^T + B_{22}
	\end{bmatrix} \tag{$B_{21} = B_{12}^T$}
	\end{align*}
	So the asymptotic covariance matrix of $(\widehat\mu_0, \widehat\mu_1)$ is equal to $$A_{21}B_{11}^{-1}A_{21}^T + A_{21}B_{11}^{-1}B_{12} + \left(A_{21}B_{11}^{-1}B_{12}\right)^T + B_{22}.$$
\end{proof}

\section{Asymptotic variance of the population average potential outcome estimator}
\label{supp_sec:Vmu}
Denote $[ V(\vect\mu_0(\alpha)) ]_{ij}$ the $ij$ element of the covariance matrix, and remember that $\mu_0(a, \alpha) = E_{F_0}[\overline{Y}_i(a, \alpha)]$. Then
\begin{align*}
[V(\vect\mu_0(\alpha))]_{(a+1)(a+1)}
= & E_{F_0} \left[ \left( \frac1{n_i} \sum_{j = 1}^{n_i}
\frac{\counteralloc(\mb A_{i, -j} | A_{ij} = a, \mb L_i)}
{f_{\mb A| \mb L, i}(\mb A_i | \mb L_i)} I(A_{ij} = a) Y_{ij} - \mu_0(a, \alpha) \right)^2 \right] \\
= & E_{F_0} \left[ \left( \frac1{n_i} \sum_{j = 1}^{n_i}
\frac{\counteralloc(\mb A_{i, -j} | A_{ij} = a, \mb L_i)}
{f_{\mb A| \mb L, i}(\mb A_i | \mb L_i)} I(A_{ij} = a) Y_{ij} \right)^2  \right] +
\mu_0(a, \alpha)^2 - \\
& - 2\mu_0(a, \alpha) E_{F_0}\left[ \frac1{n_i} \sum_{j = 1}^{n_i}
\frac{\counteralloc(\mb A_{i, -j} | A_{ij} = a, \mb L_i)}
{f_{\mb A| \mb L, i}(\mb A_i | \mb L_i)} I(A_{ij} = a) Y_{ij}  \right] \\
=& E_{F_0} \big[\overline{Y}_i^L(a, \alpha)^2 \big] + \mu_0(a, \alpha)^2 -
2\mu_0(a, \alpha) E_{F_0}\big[\overline{Y}^L_i(a, \alpha)\big] \\
=&
Var_{F_0}\big[\overline{Y}_i^L(a, \alpha)\big]
\end{align*}
\begin{align*}
[V(\vect\mu_0(\alpha))]_{12} 
=& E_{F_0} \left[
\left( \widehat{Y}_i^L(0, \alpha) - \mu_0(0, \alpha) \right)
\left( \widehat{Y}_i^L(1, \alpha) - \mu_0(1, \alpha) \right) \right] \\
=& E_{F_0}\left[ \widehat{Y}_i^L(0, \alpha) \widehat{Y}_i^L(1, \alpha)\right] - \mu_0(0, \alpha) E_{F_0}\left[\widehat{Y}_i^L(1, \alpha) \right] - \\
& - \mu_0(1, \alpha) E_{F_0}\left[ \widehat{Y}_i^L(0, \alpha)\right] + \mu_0(0, \alpha) \mu_0(1, \alpha) \\
=& E_{F_0}\left[ \widehat{Y}_i^L(0, \alpha) \widehat{Y}_i^L(1, \alpha)\right] - E_{F_0}\left[ \widehat{Y}_i^L(0, \alpha) \right]  E_{F_0}\left[\widehat{Y}_i^L(1, \alpha)\right] \\
=& Cov_{F_0}\left(\overline{Y}_i^L(0, \alpha), \overline{Y}_i^L(1, \alpha)\right)
\end{align*}

\section{Population average potential outcome definitions in the literature}
\label{supp_sec:population_estimand}

Assuming partial interference, \cite{Hudgens2008}, and \cite{Tchetgen2012} defined the population average potential outcome as an average of the group-level potential outcomes
$\overline{Y}(a;\alpha) = \frac 1N \sum_{i = 1}^N \overline{Y}_i(a; \alpha)$.
On the other hand, \cite{Liu2016} define the population average potential outcomes without assuming partial interference (and therefore without assuming the existence of interference clusters) as the average of the individual average potential outcomes. However, their asymptotic results are based on the assumption of partial interference, under which the population average potential outcome can be written as 
\[ \overline{Y}^{Liu}(a;\alpha) = \frac 1{\sum_{i = 1}^N n_i} \sum_{i = 1}^N\sum_{j = 1}^{n_i} \overline{Y}_{ij}(a; \alpha) =  \sum_{i = 1}^N \frac {n_i}{\sum_{i = 1}^N n_i} \overline{Y}_i(a; \alpha).\]
Therefore, the estimand of \cite{Liu2016}, if partial interference is assumed, is equal to a weighted average of the group average potential outcomes with weights proportional to the number of individuals in the cluster.

Estimators for both quantities can be written in the form 
\begin{equation}
\widehat{Y}(a; \alpha) = \sum_{i = 1}^N \frac{d_i}{\sum_{i = 1}^N d_i} \widehat{Y}_i(a;\alpha),\ d_i > 0,
\label{eq:di_estimator}
\end{equation}
where $\widehat{Y}_i(a; \alpha)$ is an unbiased estimator of the group average potential outcome for cluster $i$. The difference of the population average estimators lies in the specification of $d_i$, where $d_i = 1$ and $d_i = n_i$ accordingly, for the two definitions of population average potential outcome.

\begin{proposition}
	Under the assumption of partial interference (which is also assumed by \cite{Liu2016} in their asymptotic results), all population average potential outcome estimators of the form (\ref{eq:di_estimator}) for which $d_i > 0$ does not depend on $N$, $E_{F_0}[d_i] < \infty$, and $d_i \amalg \overline{Y}_i(a; \alpha)$ are consistent for $E_{F_0} \left[ \overline Y_i(a;\alpha) \right]$.
	\label{prop:di_estimators}
\end{proposition}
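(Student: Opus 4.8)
The plan is to write the estimator in (\ref{eq:di_estimator}) as a ratio of two sample averages over i.i.d.\ clusters, identify the almost sure limits of the numerator and denominator by the strong law of large numbers, and then combine the conditional unbiasedness of $\widehat Y_i(a;\alpha)$ with the independence $d_i \amalg \overline Y_i(a;\alpha)$ to show the limiting ratio equals $E_{F_0}[\overline Y_i(a;\alpha)]$. Since $(\mb Y_i(\cdot), \mb A_i, \mb L_i)$ are i.i.d.\ under $F_0$ and each $d_i$ does not depend on $N$, the pairs $\big(d_i,\, d_i \widehat Y_i(a;\alpha)\big)$ are i.i.d.\ across $i$, and
\[
\widehat Y(a;\alpha) = \frac{\frac1N \sum_{i=1}^N d_i \widehat Y_i(a;\alpha)}{\frac1N \sum_{i=1}^N d_i}.
\]

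First I would verify integrability so that the strong law applies to both averages. The bound used in (\ref{eq:proof_bound_c}) gives $|\widehat Y_i(a;\alpha)| \le M/\rho$ with probability one under Assumption \ref{ass:super_positivity} and the bounded-outcome assumption, whence $E_{F_0}|d_i \widehat Y_i(a;\alpha)| \le (M/\rho)\, E_{F_0}[d_i] < \infty$ because $E_{F_0}[d_i] < \infty$. The numerator then converges almost surely to $E_{F_0}[d_i \widehat Y_i(a;\alpha)]$ and the denominator to $E_{F_0}[d_i] > 0$, so by the continuous mapping theorem applied to the ratio (continuous wherever the denominator is nonzero),
\[
\widehat Y(a;\alpha) \overset{a.s.}{\rightarrow} \frac{E_{F_0}[d_i \widehat Y_i(a;\alpha)]}{E_{F_0}[d_i]}.
\]

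Next I would evaluate the numerator limit. Conditioning on $(\mb L_i, \mb Y_i(\cdot))$ and using that the weight $d_i$ is a cluster characteristic not depending on the realized treatment assignment (as with the motivating choices $d_i = 1$ and $d_i = n_i$, where $n_i$ is absorbed into $\mb L_i$), I would pull $d_i$ outside the inner expectation and apply the conditional unbiasedness established in Theorem \ref{theorem:unbiased_cond_alpha}, namely $E[\widehat Y_i(a;\alpha)\mid \mb L_i,\mb Y_i(\cdot)] = \overline Y_i(a;\alpha)$, to obtain $E_{F_0}[d_i \widehat Y_i(a;\alpha)] = E_{F_0}[d_i \overline Y_i(a;\alpha)]$. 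The independence $d_i \amalg \overline Y_i(a;\alpha)$ (together with boundedness of $\overline Y_i$ and $E_{F_0}[d_i]<\infty$, which ensure integrability) then factorizes the right-hand side as $E_{F_0}[d_i]\, E_{F_0}[\overline Y_i(a;\alpha)]$, so the limiting ratio collapses to $E_{F_0}[\overline Y_i(a;\alpha)]$, as claimed.

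The hard part will be the identification of the numerator limit, because the independence hypothesis is posed for the conditional-mean quantity $\overline Y_i(a;\alpha)$ whereas the sample average involves the estimator $\widehat Y_i(a;\alpha)$. Bridging the two needs the iterated-expectation step, which relies on $d_i$ being measurable with respect to $(\mb L_i, \mb Y_i(\cdot))$ and ancillary to the treatment mechanism given covariates; this is precisely what makes the conditional unbiasedness of Theorem \ref{theorem:unbiased_cond_alpha} applicable and is satisfied by the equal-weight ($d_i=1$) and size-weight ($d_i=n_i$) estimators that motivate the comparison.
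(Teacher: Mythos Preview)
Your argument is correct, but it takes a different route from the paper's proof. The paper casts the estimator as an M-estimator with estimating function $G_i(\mb O_i,d_i;\mu)=d_i\big[\widehat Y_i(a;\alpha)-\mu\big]$, identifies $\widehat\mu$ as the sample root and $\mu_0(a,\alpha)$ as the population root, and then appeals to monotonicity of $G_i$ in $\mu$ (uniqueness of both roots) to conclude consistency, mirroring the proof of Theorem~\ref{theorem:asymptotic_normality}. You instead write the estimator as a ratio of two sample means and apply the strong law of large numbers plus the continuous mapping theorem. Both arguments land on the same key identity $E_{F_0}[d_i\widehat Y_i(a;\alpha)]/E_{F_0}[d_i]=E_{F_0}[\overline Y_i(a;\alpha)]$ and obtain it by the same iterated-expectation step combined with $d_i\amalg\overline Y_i(a;\alpha)$; you are simply more explicit that this step uses $d_i$ being a function of $(\mb L_i,\mb Y_i(\cdot))$ and not of the realized treatment. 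Your route is more elementary and self-contained; the paper's M-estimation framing has the advantage of integrating seamlessly with the asymptotic-variance derivation used immediately afterward in Proposition~\ref{prop:di_variances}.
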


\renewcommand*{\proofname}{\textbf{Proof of Proposition \ref{prop:di_estimators}}}

\begin{proof}
	This can be shown by considering the estimating equation
	$$\sum_{i = 1}^N G_i(\mb Y_i, \mb L_i, \mb A_i, d_i; \mu) = 0,$$
	where
	$$
	G_i(\mb Y_i, \mb L_i, \mb A_i, d_i; \mu) = d_i \left[ \widehat{Y}_i(a; \alpha) - \mu \right],
	$$
	The solution to this equation is
	\[ \displaystyle \widehat{\mu} = \sum_{i = 1}^N \frac{d_i}{\sum_{i = 1}^N d_i} \widehat{Y}_i(a, \alpha), \]
	and the solution to $\int G_i(\mb y_i, \mb l_i, \mb a_i, d_i;\mu) \mathrm{d}F_0(\mb y_i, \mb l_i, \mb a_i) = 0$ is
	\begin{align*}
	& \frac{E[d_i \widehat{Y}_i(a; \alpha)]}{E[d_i]} =
	\frac{E\left\{ E \big[d_i \widehat{Y}_i(a; \alpha) | d_i\big] \right\}}{E[d_i]} = \frac{E\left\{ d_i E\big[\widehat{Y}_i(a;\alpha)\big] \right\}}{E[d_i]} \\
	& = \frac{E\big[d_i \overline{Y}_i(a;\alpha)\big]}{E[d_i]}
	= E_{F_0} \left[ \overline{Y}_i(a;\alpha) \right]
	= \mu_0(a, \alpha),
	\end{align*}
	since $d_i \amalg \overline{Y}_i(a;\alpha)$.
	
	Since $G_i$ is monotone in $\mu$, both $\sum_{i = 1}^N G_i$ and $\int G_i$ are monotone in $\mu$ which implies uniqueness of the roots and establishes $\widehat\mu \overset{p}{\rightarrow} \mu_0(a, \alpha)$.
\end{proof}

Based on this, assuming $n_i \amalg \overline{Y}_i(a;\alpha)$ both estimators are consistent for the same quantity. However, when the propensity score is known, the weighting scheme $d_i = c$, constant, leads to the asymptotically most efficient estimator among all of the estimators of the form (\ref{eq:di_estimator}), based on the following proposition. Since two estimators using $d_i$ and $d_i' = cd_i$ are exactly the same, the estimator (\ref{eq:di_estimator}) for $d_i = 1$ is the asymptotically efficient estimator.

\begin{proposition}
	Assuming that the conditions of Theorem \ref{theorem:asymptotic_normality} and Proposition \ref{prop:di_estimators} hold, and $\exists M_d$ such that $d_i < M_d, \forall i$, then $\widehat{Y}(a;\alpha)  = \frac1N \sum_{i = 1}^N \widehat{Y}_i(a;\alpha)$ is the asymptotically most efficient estimator of $\mu_0(a, \alpha)$ among all estimators of the class (\ref{eq:di_estimator}).
	\label{prop:di_variances}
\end{proposition}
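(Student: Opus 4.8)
The plan is to recognise every member of the class (\ref{eq:di_estimator}) as the solution $\widehat\mu_d$ of the estimating equation $\sum_{i=1}^N d_i\big(\widehat Y_i(a;\alpha) - \mu\big) = 0$ already used in the proof of Proposition \ref{prop:di_estimators}, and to extract its asymptotic variance. Writing $\sigma_i = \widehat Y_i(a;\alpha) - \mu_0(a,\alpha)$, the cleanest route is the direct decomposition
\[
\sqrt N\big(\widehat\mu_d - \mu_0(a,\alpha)\big) = \frac{N^{-1/2}\sum_{i=1}^N d_i\sigma_i}{N^{-1}\sum_{i=1}^N d_i},
\]
where the numerator is a normalised sum of i.i.d.\ mean-zero terms $d_i\sigma_i$ (mean zero because $E_{F_0}[d_i\widehat Y_i(a;\alpha)] = E_{F_0}[d_i]\,\mu_0(a,\alpha)$ was shown in Proposition \ref{prop:di_estimators}) and the denominator converges to $E_{F_0}[d_i]$. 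Applying the central limit theorem to the numerator and Slutsky's theorem then yields the limiting variance $E_{F_0}[d_i^2\sigma_i^2]\big/\big(E_{F_0}[d_i]\big)^2$. Here the boundedness $d_i < M_d$, combined with $|Y_{ij}| < M$ and the positivity bound $\rho$ (which bound $|\sigma_i|$ as in (\ref{eq:proof_bound_c})), is exactly what guarantees that $d_i\sigma_i$ has finite second moment so the CLT applies; this is the role of the extra hypothesis over Proposition \ref{prop:di_estimators}.

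The second step is to factor the numerator of this variance. Using the independence $d_i \amalg \overline Y_i(a;\alpha)$ assumed in Proposition \ref{prop:di_estimators} together with the conditional unbiasedness $E[\widehat Y_i(a;\alpha)\mid \mb L_i, \mb Y_i(\cdot)] = \overline Y_i(a;\alpha)$ from Theorem \ref{theorem:unbiased_cond_alpha}, I would establish $E_{F_0}[d_i^2\sigma_i^2] = E_{F_0}[d_i^2]\,E_{F_0}[\sigma_i^2]$. The asymptotic variance of $\widehat\mu_d$ then becomes $\frac{E_{F_0}[d_i^2]}{\big(E_{F_0}[d_i]\big)^2}\,E_{F_0}[\sigma_i^2]$, whereas the equal-weight estimator $\widehat Y(a;\alpha)$ corresponds to $d_i \equiv 1$ and has asymptotic variance exactly $E_{F_0}[\sigma_i^2]$, the quantity appearing in Theorem \ref{theorem:asymptotic_normality}.

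Finally, Jensen's inequality (equivalently Cauchy--Schwarz) gives $E_{F_0}[d_i^2] \ge \big(E_{F_0}[d_i]\big)^2$, so the multiplicative factor is at least $1$ and every estimator in the class has asymptotic variance no smaller than that of $\widehat Y(a;\alpha)$, with equality precisely when $\mathrm{Var}_{F_0}(d_i) = 0$; since $d_i$ and $cd_i$ give the same estimator, this pins down $d_i \equiv 1$ as the efficient choice. I expect the main obstacle to be the factorization $E_{F_0}[d_i^2\sigma_i^2] = E_{F_0}[d_i^2]\,E_{F_0}[\sigma_i^2]$. Decomposing $\sigma_i = (\widehat Y_i(a;\alpha) - \overline Y_i(a;\alpha)) + (\overline Y_i(a;\alpha) - \mu_0(a,\alpha))$, the cross term vanishes by conditional unbiasedness and the heterogeneity term factors by the stated independence; the delicate piece is the pure sampling-error term $E_{F_0}\big[d_i^2\,\mathrm{Var}(\widehat Y_i(a;\alpha)\mid \mb L_i, \mb Y_i(\cdot))\big]$, which factors cleanly only if the weights carry no information about the conditional sampling variance of $\widehat Y_i(a;\alpha)$. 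Making that independence explicit (or restricting attention to weights that are constant, such as $d_i \equiv 1$ and $d_i = n_i$ under the relevant independence) is the step where the argument must be carefully justified.
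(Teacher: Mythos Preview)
Your approach mirrors the paper's proof closely: both derive the asymptotic variance of $\widehat\mu_d$ as $E_{F_0}\big[d_i^2(\widehat Y_i(a;\alpha)-\mu_0)^2\big]\big/\big(E_{F_0}[d_i]\big)^2$, factor the numerator using the independence assumption, and then apply Jensen's inequality to conclude that the ratio $E_{F_0}[d_i^2]/(E_{F_0}[d_i])^2\ge 1$ with equality only for constant $d_i$. The only cosmetic difference is that the paper invokes the M-estimation machinery (Serfling's Lemma~A, Section~7.2.1) to obtain the limiting variance, whereas you use the direct decomposition with the CLT and Slutsky; these are equivalent here.

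Your concern about the factorization $E_{F_0}[d_i^2\sigma_i^2]=E_{F_0}[d_i^2]\,E_{F_0}[\sigma_i^2]$ is well placed, and you are in fact being more careful than the paper. The paper writes ``Since $d_i \amalg \overline Y_i(a;\alpha)$, we have that $d_i^2 \amalg (\overline Y_i(a;\alpha)-\mu_0)^2$'' and then immediately factors $E_{F_0}\big[d_i^2(\widehat Y_i(a;\alpha)-\mu_0)^2\big]$, silently replacing $\overline Y_i$ by $\widehat Y_i$. As you correctly diagnose, this step requires $d_i$ to be independent of the full estimator $\widehat Y_i(a;\alpha)$ (equivalently, of the sampling error $\widehat Y_i-\overline Y_i$ in addition to $\overline Y_i$), which is strictly stronger than the stated hypothesis $d_i\amalg\overline Y_i(a;\alpha)$. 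The paper does not address this; so the obstacle you flag is a genuine gap in the published argument rather than in your own, and your decomposition of $\sigma_i$ into sampling-error and heterogeneity pieces is exactly the right way to expose what additional independence is needed.
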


\renewcommand*{\proofname}{\textbf{Proof of Proposition \ref{prop:di_variances}}}
\begin{proof}
	Based on Proposition \ref{prop:di_estimators}, $\widehat{\mu}_{\vect d}(a;\alpha) = \sum_{i = 1}^N \frac{d_i}{\sum d_i} \widehat{Y}_i(a;\alpha)$ are consistent for $\mu_0= \mu_0(a, \alpha)$.
	Since $G_i(\mb Y_i, \mb L_i, \mb A_i, d_i; \mu) = d_i \left[ \widehat{Y}_i(a; \alpha) - \mu \right]$ is monotone decreasing in $\mu$ with $\frac{\partial}{\partial \mu} G_i(\mb Y_i, \mb L_i, \mb A_i, d_i; \mu) = - d_i < 0$, we have that $\mu_{\vect d}$ and $\mu_0$ are isolated roots of $\sum_{i = 1}^N G_i = 0$ and $\int G_i = 0$. Also, $E[\frac{\partial}{\partial \mu} G_i(\mb Y_i, \mb L_i, \mb A_i, d_i; \mu)] = -E[d_i] \neq 0$. Lastly, from (\ref{eq:proof_bound_c}) and $d_i < M_d$ we have that $\int G_i^2$ is bounded by $M_d^2c^2$. We can straightforwardly use M-estimation theory to acquire the asymptotic variance.
	Lemma A in section 7.2.1 of \cite{Serfling1980},
	$\sqrt{n}(\widehat{\mu}_{\vect d} - \mu_0) \overset{d}{\rightarrow} N(0, \sigma^2(\vect d))$, where
	$\sigma^2(\vect d) = E_{F_0}[G_i^2(\cdot, d_i;\mu_0)] / E_{F_0}^2[d_i]$.
	
	We will show that $\sigma^2(\vect d)$ is minimized when $d_i = 1, \forall i$. Since $d_i \amalg \overline{Y}_i(a;\alpha)$, we have that $d_i^2 \amalg (\overline{Y}_i(a;\alpha) - \mu_0)^2$. Then \( \displaystyle
	\sigma^2_{\vect d}
	=  \frac{E_{F_0}\left[ d_i^2 ( \widehat{Y}(a; \alpha) - \mu_0)^2 \right]}{[E_{F_0}d_i]^2}
	= \frac{E_{F_0}\left[d_i^2 \right]}{ E_{F_0}^2[d_i]}
	E_{F_0}\left[ (\widehat{Y}_i(a;\alpha) - \mu_0)^2 \right]
	= \frac{E_{F_0}\left[d_i^2 \right]}{ E_{F_0}^2[d_i]} \sigma^2_{\vect 1}
	\),
	where $\sigma^2_{\vect 1}$ is the asymptotic variance of the estimator for $d_i = 1$. From Jensen's inequality, and since $\phi(x) = x^2$ is a convex function, we have that $E^2_{F_0}[d_i] \leq E_{F_0}[d_i^2]$, which establishes $\sigma^2_{\vect d} \geq \sigma^2_{\vect 1}$. Equality holds if and only if all values $d_i$ are equal.
\end{proof}

\section{Calculating cluster-intercept for a specific cluster average propensity of treatment}
\label{supp_sec:computation}

As described in section \ref{subsec:sim_true_po}, $\xi_i^\alpha$ is chosen such that
\begin{equation}
\frac 1{n_i}\sum_{j = 1}^{n_i} \counteralloc(A_{ij} = 1 | \mb L_i) = \alpha,
\label{app_eq:choose_xi}
\end{equation}
where
\( \displaystyle
\mathrm{logit}\counteralloc(A_{ij} = 1 | \mb L_i) = \xi_i^\alpha + L_{ij} \vect\delta,
\)
and $L_{ij} = (L_{1ij}, L_{2ij}, \dots, L_{pij})^T$ the value of the $p$ predictors of the propensity score model. Then, (\ref{app_eq:choose_xi}) can be rewritten as
\begin{align*}
& \frac 1{n_i}\sum_{j = 1}^{n_i} \counteralloc(A_{ij} = 1 | \mb L_i) =
\frac1{n_i} \sum_{j = 1}^{n_i}
\frac{ \exp\left\{ \xi_i^\alpha + L_{ij} \vect\delta \right\}}
{1 + \exp\left\{ \xi_i^\alpha + L_{ij} \vect\delta \right\}}
=  \frac1{n_i} \sum_{j = 1}^{n_i}
\frac{ \exp\left\{L_{ij} \vect\delta \right\}}
{\exp\left\{ \xi_i^\alpha \right\} +
	\exp\left\{L_{ij} \vect\delta \right\}}
= \alpha \\
& \iff \left| \frac1{n_i} \sum_{j = 1}^{n_i}
\frac{ \exp\{L_{ij} \vect\delta \}}
{\exp\{ - \xi_i^\alpha \} + \exp \{L_{ij} \vect\delta \}}
- \alpha \right| = 0
\end{align*}
Since the only unknown is $\xi_i^\alpha$, we use optimization techniques and set $\xi_i^\alpha$ to be the value $\xi$ at which the function 
$$
g(\xi) = \left| \frac1{n_i} \sum_{j = 1}^{n_i}
\frac{ \exp\left\{L_{ij} \vect\delta \right\}}
{\exp\{ -\xi \} + \exp\{L_{ij} \vect\delta \}}
- \alpha \right|
$$
is minimized.


\bibliographystyle{biom}
\bibliography{Interference}

\end{document}